\newtheorem{theorem}{Theorem}[section]
\newtheorem{proposition}[theorem]{Proposition}
\newtheorem{corollary}[theorem]{Corollary}
\newtheorem{lemma}[theorem]{Lemma}
\theoremstyle{definition}
\newtheorem{remark}[theorem]{Remark}
\newtheorem{hypothesis}[theorem]{Hypothesis}
\def\R{\mathbb{R}}
 \def\Op{\mathfrak{Op}}
\def\x{\underline{x}}
\def\z{\mathfrak{z}}
\def\Z{\mathbb{Z}}
\def\bb1{{\textrm{1}\hspace{-3pt}\mathbf{l}}}
\def\Ie0{[-\epsilon_0,\epsilon_0]}
\def\BC2{\mathbb{B}\big(\mathbb{C}^2\big)}
\def\beq{\begin{equation}}
\def\eeq{\end{equation}}
\numberwithin{equation}{section}
\numberwithin{equation}{section}
\begin{document}

%------
% Insert the title of your paper and (if necessary)
% a short title for the running head.
%------
\title{Sharp spectral stability for a class of singularly perturbed pseudo-differential operators}
\titlemark{Sharp spectral stability}

%------
% Insert full names of the authors.
% Add further authors as follows:
%  \emsauthor{2}{}{}
%  \emsauthor{3}{}{}
% etc.
% Abbreviate first names for the running head.
%------
\emsauthor{1}{Horia D. Cornean}{H.~Cornean}
\emsauthor{2}{Radu Purice}{R.~Purice}
%------
% Use \authormark if the list of authors is too
% long for the running head: \authormark{A.~Doe et al.}
%------

%------
% Add one \emsaffil and one \email for each author.
%------
\emsaffil{1}{Department of Mathematical Sciences, Aalborg University, Skjernvej 4A, DK-9220 Aalborg, Denmark \email{cornean@math.aau.dk}}

\emsaffil{2}{\enquote{Simion Stoilow} Institute of Mathematics of the Romanian Academy, P.O. Box 1-764, 014700 Bucharest, Romania \email{Radu.Purice@imar.ro}}

%------
% Add MSC 2020 codes according to https://zbmath.org/classification/.
% A unique primary MSC code (in curly brackets) is mandatory,
% while secondary MSC codes (in square brackets) are optional.
%------
\classification[35S05]{81Q10, 81Q15}

%------
% Add a list of keywords.
%------
\keywords{pseudo-differential operators, spectral edges, spectral gaps, singular perturbations}

%------
% Insert your abstract.
%------
\begin{abstract}
Let $a(x,\xi)$ be a  real H\"ormander symbol of the type $S_{0,0}^0(\R^{d}\times \R^d)$, let $F$ be a smooth function with all its derivatives globally bounded, and let $K_\delta$ be the self-adjoint Weyl quantization of the perturbed symbols $a(x+F(\delta\, x),\xi)$, where $|\delta|\leq 1$. First, we prove that the Hausdorff distance between the spectra of $K_\delta$ and $K_{0}$ is bounded by $\sqrt{|\delta|}$, and we give examples where spectral gaps of this magnitude can open when $\delta\neq 0$. Second, we show that the distance between the spectral edges of $K_\delta$ and $K_0$ (and also the edges of the inner spectral gaps, as long as they remain open at $\delta=0$) are of order $|\delta|$, and give a precise dependence on the width of the spectral gaps.
\end{abstract}

\maketitle

%------
% INSERT THE BODY OF THE PAPER HERE (except
% acknowledgments, funding info and bibliography)
%------

%------
% Insert acknowledgments and information
% regarding funding at the end of the last
% section, i.e., right before the bibliography.
%------

%\begin{ack}

%\end{ack}

\begin{funding}
This work was partially supported by the Grant 8021-00084B of the Independent Research Fund Denmark $|$ Natural Sciences.
\end{funding}

%------
% Insert the bibliography.
%------

\section{Introduction and main results}

Let $a(x,\xi)$ be a real H\"ormander symbol \cite{H-3} of class $S^0_{0,0}(\R^{d}\times\R^d)$, i.e. a smooth function on $\R^{2d}$ satisfying the estimate:
\begin{align}\label{dc1}
\sup_{x,\xi \in \R^d}|D_x^\alpha D_\xi^\beta a|<\infty,\quad \forall \alpha,\beta\in \mathbb{N}^d.
\end{align}
For $|\delta|\leq 1$ let $a_\delta(x,\xi)=a(\sqrt{1+\delta}\, x,\sqrt{1+\delta}\, \xi)\in \R$. It belongs to the same class. 
 We denote by $H_\delta=\Op^w (a_\delta)$ the self-adjoint operator generated by the Weyl quantization, which means that:
\[\langle \psi,H_\delta \phi\rangle :=(2\pi)^{-d}\int_{\R^d}d\xi\, \int_{\R^{2d}}\;dx'\; dx\,  e^{i\xi\cdot(x-x')}a_\delta((x+x')/2,\xi)\overline{\psi}(x)\phi(x'),\]
where $\psi,\phi\in \mathscr{S}(\R^d)$ and  $\langle\cdot\,,\,\cdot\rangle $ denotes the usual scalar product in $L^2(\R^d)$ (considered to be anti-linear in the first variable). The operator $H_\delta$ has a distribution kernel that can be written as the oscillatory integral: 
\begin{equation}\label{hc2}
    \mathfrak{H}_\delta(u,v):=(2\pi)^{-d} \int_{\R^d} a_\delta(u,\xi)e^{i \xi\cdot v}d\xi,\quad u=(x+x')/2,\; v=x-x'.
\end{equation}
%so that we have the identity $\big(\psi,H_\delta \phi\big)=\big\langle\mathfrak{H}_\delta\,,\,\Upsilon^*(\overline{\psi}\otimes\phi)\big\rangle$ with $\big\langle\cdot\,,\,\cdot\big\rangle$ denoting the canonical duality map $\mathscr{S}^\prime(\R^d)\times\mathscr{S}(\R^d)\rightarrow\mathbb{C}$ and $\Upsilon^*(\Phi):=\Phi\circ\Upsilon$ for any $\Phi\in\mathscr{S}(\R^{2d})$, and  $\Upsilon(u,v):=\big(u+v/2,u-v/2\big)\in\R^d\times\R^d$ for any $(u,v)\in\R^d\times\R^d$.

In \cite{GRS}, Gr\"ochenig {\itshape et.al.} proved that the spectral edges of the spectrum $\sigma(H_\delta)$ are Lipschitz at  $\delta=0$.   The problem is non-trivial because the map $\delta\, \mapsto H_\delta$ is not necessarily differentiable in the operator norm topology, which can already be seen at the level of the symbol: $a$ should have some extra linear decay in both $x$ and $\xi$ in order to make sure that $H_\delta-H_0$ has a norm of order $\delta$, which would imply that the Hausdorff distance between the spectra of $H_\delta$ and $H_0$ is Lipschitz continuous at zero. 

Nevertheless, the authors of  \cite{GRS}  show that such a strong decay is far from necessary if one is only interested in the spectral edges. Actually they even consider more general operators corresponding to symbols of Sj\"{o}strand type, operators which belong to certain weighted modulation spaces; see \cite{G0} and references therein for an introduction to the subject. 

A similar  phenomenon  appears in the case of long range magnetic perturbations \cite{B2, Cornean10, CP1, CP2, C4}. In fact, the two problems are very much related, see Section \ref{sec3.2} of the current manuscript for more details.

 %  Up to a unitary transformation, we will show that $H_\delta$ as the Weyl quantization of symbol that is a particular case of a more general class of symbols that we shall deal with.
%Using a unitary equivalence (see \eqref{hc3}) and working with H\"{o}rmander type symbols of class $S^0_{0,0}(\R^d\times\R^d)$ (that is contained in the Sj\"{o}strand class of symbols but more suitable for the less symmetric perturbation that we consider) we generalize the results obtained in \cite{GRS} and obtain also estimations for the Hausdorff distance between the spectra (see our Theorem \ref{thm1}).

\subsection{A more general perturbation} In this manuscript we are interested in a more general perturbation of the symbol, where the dilation treated in  \cite{GRS} becomes just a particular case. In order to achieve that, we have to "rotate" the operators $H_\delta$ in the following way:
\begin{lemma}\label{P-U-trsf}
Denote by $U_\delta$ the unitary transformation in $L^2(\R^d)$ given by 
\[(U_\delta f)(x)=(1+\delta)^{-d/4}f((1+\delta)^{-1/2}\; x),\quad\forall f\in L^2(\R^d).\]
Then $U_\delta^* H_\delta U_\delta$ equals the Weyl quantization  of the symbol $a(x+\delta\, x\,,\, \xi)$, and is isospectral with $H_\delta$.
\end{lemma}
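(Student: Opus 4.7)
The proof is a direct computation: unitary conjugation by a dilation should, at the level of Weyl symbols, produce a matching dilation in the $(x,\xi)$ variables, and the square root factors are arranged so that the $\xi$-dilation is absorbed by the change of variables in the momentum integral.

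Concretely, I would set $\lambda:=\sqrt{1+\delta}$ and first derive the kernel of the conjugated operator. Since $(U_\delta\phi)(x)=\lambda^{-d/2}\phi(x/\lambda)$ and $(U_\delta^\ast g)(x)=\lambda^{d/2}g(\lambda x)$, a straightforward change of variables shows that if $K_\delta(x,x')=\mathfrak{H}_\delta((x+x')/2,x-x')$ is the Schwartz kernel of $H_\delta$, then the kernel of $U_\delta^\ast H_\delta U_\delta$ is
\[
\widetilde K(x,x')\;=\;\lambda^d\,K_\delta(\lambda x,\lambda x').
\]
Substituting the oscillatory representation \eqref{hc2} gives
\[
\widetilde K(x,x')=\lambda^d (2\pi)^{-d}\int_{\R^d} a\bigl(\lambda^2(x+x')/2,\lambda\xi\bigr)\,e^{i\lambda\xi\cdot(x-x')}\,d\xi .
\]

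Next I would perform the change of variables $\eta=\lambda\xi$ in the $\xi$-integral; the Jacobian $\lambda^{-d}$ exactly cancels the prefactor $\lambda^d$, yielding
\[
\widetilde K(x,x')=(2\pi)^{-d}\int_{\R^d} a\bigl((1+\delta)(x+x')/2,\eta\bigr)\,e^{i\eta\cdot(x-x')}\,d\eta .
\]
Since $(1+\delta)(x+x')/2=((x+x')/2)+\delta\,((x+x')/2)$, the right-hand side is precisely the Weyl kernel of the symbol $b(x,\xi):=a(x+\delta x,\xi)$, in view of the definition of $\Op^w$ recalled in the introduction. Hence $U_\delta^\ast H_\delta U_\delta=\Op^w\bigl(a(x+\delta x,\xi)\bigr)$.

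Finally, isospectrality is automatic: conjugation by the unitary $U_\delta$ preserves the spectrum of any self-adjoint operator, so $\sigma(U_\delta^\ast H_\delta U_\delta)=\sigma(H_\delta)$. The only thing requiring any care in the argument is that the oscillatory integral manipulations above are rigorously justified — but since $a\in S^0_{0,0}$ and all integrals are interpreted in the distributional (oscillatory) sense, the substitutions are standard and do not pose any real obstacle. In short, the entire statement reduces to bookkeeping of the dilation factor $\lambda=\sqrt{1+\delta}$, with the symmetric placement of $\lambda$ in both arguments of $a_\delta$ being exactly what allows the $\xi$-dilation to be removed at the price of doubling the $x$-dilation.
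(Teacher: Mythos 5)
Your computation is correct and complete: the kernel of the conjugated operator is indeed $\lambda^d K_\delta(\lambda x,\lambda x')$ with $\lambda=\sqrt{1+\delta}$, the substitution $\eta=\lambda\xi$ cancels the prefactor and transfers the full dilation $\lambda^2=1+\delta$ onto the spatial argument of $a$, and isospectrality is immediate from unitary conjugation. The paper does not supply a proof — it simply declares the lemma \enquote{a rather well-known fact} and omits it — so there is nothing to compare against, but the kernel-dilation argument you give is exactly the standard one this remark alludes to.
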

The proof of this lemma, rather straightforward, will be given in the next section. 
 The advantage of working with symbols shifted only in $x$ is that we can identify a larger class of perturbations, where the same spectral results as proved in \cite{GRS} hold true. More precisely, instead of $x+\delta\, x$ we will consider $x+F(\delta\,x)$ where $F$ satisfies the following assumptions:

\begin{hypothesis}\label{hyp1.2}
Let $F\in [C^\infty(\R^d)]^d$ be a smooth real vector-valued function with all its derivatives of all order uniformly bounded (thus $F$  can grow linearly at infinity).  
Given any real symbol $a\in S^0_{0,0}(\R^d\times\R^d)$ as in \eqref{dc1} and $\delta\in \R$, let $K_\delta$ be the Weyl quantization of the symbol 
$a[F]_{\delta}(x,\xi):=a\big (x+F(\delta\, x),\xi\big )$, i.e. $K_\delta=\Op^w\big (a[F]_{\delta}\big)$. Also, the distribution kernel of $K_\delta$ (defined as in \eqref{hc2}) is denoted by  $\mathfrak{K}_\delta$. If $F(x)=x$, then $K_\delta$ and $H_\delta$ are unitarily equivalent and isospectral.
\end{hypothesis}

We will only work with symbols of class $S^0_{0,0}(\R^d\times\R^d)$ (included in the Sj\"{o}strand class of symbols considered in \cite{GRS}) since they are more suitable for the less symmetric perturbation which we consider.

\subsection{The main results}

We start by recalling the definition of the  {\it  Hausdorff distance} between any two compact sets $M,N\subset \R$: 
\[d_h(M,N)=\max\Big \{\sup_{\lambda\in M}{{\rm dist}}(\lambda,N),\; \sup_{\mu\in N}{\text{dist}}(\mu,M)\Big \}.\]

Our first main result gives a sharp upper bound on how much the spectra can "move" as sets:
\begin{theorem}\label{thm1}
 Consider the notation introduced in Hypothesis \ref{hyp1.2}. Then there exists $C>0$ such that \[d_h\big (\sigma(K_\delta),\, \sigma(K_{0})\big )\leq C\sqrt{|\delta|}\] for all $|\delta|\leq 1$. This bound is sharp, in the sense that one can construct a $K_0$ such that $0\in\sigma(K_0)$ while the spectrum of $K_\delta$ develops gaps of order $\sqrt{|\delta|}$ near zero. 
\end{theorem}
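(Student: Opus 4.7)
\textit{Setup and reduction.} The plan is to apply Weyl's criterion: for each $\lambda\in\sigma(K_0)$ and every $\eta>0$, I would produce a normalized vector $\tilde\psi$ such that $\|(K_\delta-\lambda)\tilde\psi\|\leq C\sqrt{|\delta|}+\eta$, and symmetrically starting from $\lambda\in\sigma(K_\delta)$. Before anything, I would absorb the constant $F(0)$ by writing $a(x+F(\delta x),\xi)=\tilde a(x+G(\delta x),\xi)$ with $\tilde a(y,\xi):=a(y+F(0),\xi)\in S^0_{0,0}$ and $G:=F-F(0)$; this lets us assume throughout that $F(0)=0$ (at the cost of replacing $K_0$ with an isospectral unitary translate), so that $|F(\delta x)|\leq\|DF\|_\infty|\delta x|$ globally.

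\textit{Localization at scale $L\sim|\delta|^{-1/2}$.} Starting from a normalized quasi-mode $\psi$ of $K_0$ with $\|(K_0-\lambda)\psi\|<\eta$, introduce a smooth cutoff $\chi_L(x):=\chi((x-y_0)/L)$ with $\chi\in C_c^\infty(\R^d)$, $\supp\chi\subset B(0,2)$, $\chi\equiv 1$ on $B(0,1)$. Since $K_0^{(c)}:=\Op^w(a(x+c,\xi))$ is unitarily equivalent, hence isospectral, to $K_0$ for every $c\in\R^d$, the natural object to compare with $K_\delta$ in a ball of radius $L$ around $y_0$ is $K_0^{(c)}$ with the local shift $c:=F(\delta y_0)$. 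Set $\psi':=\tau_{-c}\psi$ (a quasi-mode for $K_0^{(c)}$ with the same accuracy), $c_\psi:=\|\chi_L\psi'\|$, and $\tilde\psi:=\chi_L\psi'/c_\psi$, choosing $y_0$ so that $c_\psi$ is bounded below.

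\textit{Two Calderón-Vaillancourt estimates and balancing.} Decompose $(K_\delta-\lambda)\tilde\psi=(K_\delta-K_0^{(c)})\tilde\psi+(K_0^{(c)}-\lambda)\tilde\psi$. For the second summand, use
\[
(K_0^{(c)}-\lambda)\chi_L\psi'=\chi_L(K_0^{(c)}-\lambda)\psi'+[K_0^{(c)},\chi_L]\psi',
\]
where $\|\chi_L(K_0^{(c)}-\lambda)\psi'\|\leq\eta$ (quasi-mode) and $\|[K_0^{(c)},\chi_L]\|\leq C/L$ by Calderón-Vaillancourt applied to the commutator symbol, whose principal part is the Poisson bracket $\nabla_\xi a(x+c,\xi)\cdot\nabla_x\chi_L$. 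For the first summand, factorize via the fundamental theorem of calculus:
\[
b_{\delta,c}(x,\xi):=a(x+F(\delta x),\xi)-a(x+F(\delta y_0),\xi)=[F(\delta x)-F(\delta y_0)]\cdot H_\delta(x,\xi),
\]
with $H_\delta$ uniformly in $S^0_{0,0}$. Since $|F(\delta x)-F(\delta y_0)|\leq\|DF\|_\infty\delta L$ on $\supp\chi_L$ and the spatial derivatives $\partial_x F(\delta\cdot)=\delta DF(\delta\cdot)$ are themselves $O(\delta)$, all mixed derivatives (up to a fixed order) of the cut-off product $\chi_L\cdot b_{\delta,c}$ are $O(\delta L)$, so Calderón-Vaillancourt yields $\|\Op^w(b_{\delta,c})\chi_L\|\leq C\delta L$. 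Combining, $\|(K_\delta-\lambda)\tilde\psi\|\leq C(\eta+L^{-1}+\delta L)/c_\psi$, and the choice $L=|\delta|^{-1/2}$ produces the claimed $O(\sqrt{|\delta|})$ bound.

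\textit{Main obstacle and sharpness.} The main obstacle is to keep $c_\psi$ bounded below uniformly in $\delta$: a generic quasi-mode $\psi$ may spread over scales much larger than $L$, so that $\chi_L\psi'$ loses most of the mass and the division by $c_\psi$ annihilates the gain. I would resolve this by matching the spectral accuracy to the length scale: take $\psi$ from the range of $\chi_{(\lambda-C_0\sqrt{|\delta|},\,\lambda+C_0\sqrt{|\delta|})}(K_0)$ (non-zero since $\lambda\in\sigma(K_0)$), for which an uncertainty-type argument together with the boundedness of the derivatives of $a$ shows that $\psi$ can be chosen with $\|\chi_L(\cdot-y_0)\psi\|\geq c_1>0$ for a suitable $y_0$, and this lower bound is independent of $\delta$. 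For the sharpness statement I would exhibit an explicit $a\in S^0_{0,0}$ whose Weyl quantization has $0$ as a spectral edge realized at a non-degenerate minimum of $a$ in phase space, together with an $F$ such that around this minimum the perturbation creates an effective harmonic confinement of strength $O(\delta)$; a semi-classical computation analogous to the harmonic oscillator then places the bottom of $\sigma(K_\delta)$ at height $\sim\sqrt{|\delta|}$ above $0$, opening a gap of the claimed magnitude.
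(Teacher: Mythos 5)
Your quasi-mode strategy is the natural dual of the paper's construction: the paper builds an approximate \emph{resolvent} for the perturbed operator by patching together translated resolvents of the unperturbed one over a partition of unity at scale $|\delta|^{-1/2}$ (Proposition 3.1 and Lemmas 3.2--3.3), whereas you build an approximate \emph{eigenvector}. The underlying mechanisms are the same: a cut-off at scale $L=|\delta|^{-1/2}$, a commutator error of size $L^{-1}$, a symbol-difference error of size $\delta L$ from replacing $F(\delta\,\cdot)$ by a local constant $F(\delta y_0)$, and the balance $L^{-1}=\delta L=\sqrt{\delta}$. That much is sound. There are, however, two genuine gaps and one technical omission.

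\textbf{The normalization $c_\psi$.} You correctly identify that a quasi-mode of $K_0$ may spread over many boxes of side $L$, so that $\|\chi_L\psi'\|$ has no lower bound, and dividing by $c_\psi$ destroys the estimate. But the proposed fix --- taking $\psi$ in the range of the spectral projection onto $(\lambda-C_0\sqrt{|\delta|},\lambda+C_0\sqrt{|\delta|})$ and invoking an ``uncertainty-type argument'' --- does not resolve this: a spectral projector of a bounded $S^0_{0,0}$ pseudodifferential operator gives no spatial localization whatsoever, uniformly in $\delta$ (think of the projector being the identity). The correct and standard repair is a pigeonhole argument: with a quadratic partition $\sum_\gamma\chi_\gamma^2=1$ at scale $L$, one has $\sum_\gamma\|\chi_\gamma\psi\|^2=\|\psi\|^2$ and, using IMS-type almost-orthogonality of the commutators $[K_0^{(c)},\chi_\gamma]$, also $\sum_\gamma\|(K_0^{(c)}-\lambda)\chi_\gamma\psi\|^2\leq C(\eta^2+L^{-2})$. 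Hence some $\gamma_0$ satisfies $\|(K_0^{(c)}-\lambda)\chi_{\gamma_0}\psi\|\leq C(\eta+L^{-1})\|\chi_{\gamma_0}\psi\|$ with $\|\chi_{\gamma_0}\psi\|>0$; normalizing $\chi_{\gamma_0}\psi$ gives a localized quasi-mode for free. The paper side-steps the issue entirely by working with resolvents rather than eigenvectors, which is the cleaner route.

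\textbf{The sharpness example is wrong.} You propose a symbol $a\in S^0_{0,0}$ whose quantization has $0$ as the \emph{bottom of the spectrum}, and claim the perturbation pushes this edge up to height $\sim\sqrt{|\delta|}$. This is impossible: Theorem \ref{thm2} of the very paper (which the paper proves independently of Theorem \ref{thm1}) asserts that spectral edges move Lipschitz continuously, $|E_\pm(\delta)-E_\pm(0)|\leq C|\delta|$. Any example realizing $d_h\sim\sqrt{|\delta|}$ must therefore produce a gap opening in the \emph{interior} of the spectrum. The paper's example does exactly this: $a_\delta(x,\xi)=\cos\xi_1+\cos(\xi_2+(1+\delta)x_1)$ is isospectral to the Hofstadter model at flux $1+\delta$; at $\delta=0$ (half flux) the spectrum is gapless near $0$ (a ``Dirac cone''), and for $\delta\neq0$ a gap of width $\sim\sqrt{|\delta|}$ opens around $0$.

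\textbf{A technical omission.} You apply Calder{\'o}n--Vaillancourt to $\chi_L\cdot b_{\delta,c}$, but the relevant operator is $\Op^w(b_{\delta,c})\chi_L$, whose Weyl symbol is the Moyal product $b_{\delta,c}\,\sharp\,\chi_L$, not the pointwise product. Since $b_{\delta,c}$ is only small on $\supp\chi_L$ and not globally, and since $S^0_{0,0}$ operators have non-compactly-supported kernels, this step needs an argument. The paper's Lemma \ref{lemmahc1} addresses exactly this: one first replaces $K_\delta$ by $\tilde K_\delta$ whose kernel is compactly supported in $v=x-x'$ at scale $|\delta|^{-1/2}$, paying only an $\mathcal{O}(\delta^\infty)$ error; after that reduction, all localizations commute with the operator up to genuinely local commutators, and the almost-orthogonality needed for the pigeonhole above also becomes immediate. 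You would need some analog of that reduction.
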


The next straightforward corollary spells out in a detailed way how the interior non-trivial gaps in the spectrum of $K_0$ may vary with $\delta$. 

\begin{corollary}\label{coro1}
Assume that $K_0$ has an open spectral gap $(\lambda_0,\mu_0)$ with $\lambda_0,\mu_0\in \sigma(K_0)$.  Then there exists a constant $C>0$ (the same as in Theorem \ref{thm1}), independent on the spectral gap, such that for all $|\delta|< (\mu_0-\lambda_0)^2/(4C^2)$ the interval $[\lambda_0+C\sqrt{|\delta|}, \mu_0-C\sqrt{|\delta|}]$ is non-empty and belongs to the resolvent set of $K_\delta$. Moreover, both intervals $[\lambda_0-C\sqrt{|\delta|},\lambda_0+C\sqrt{|\delta|}]$ and $[\mu_0-C\sqrt{|\delta|},\mu_0+C\sqrt{|\delta|}]$ have a non-empty intersection with $\sigma(K_\delta)$ for all $|\delta|< (\mu_0-\lambda_0)^2/(4C^2)$. 
\end{corollary}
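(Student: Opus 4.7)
The plan is to derive the corollary as an immediate consequence of Theorem \ref{thm1} by unpacking the definition of the Hausdorff distance. The inequality $d_h\big(\sigma(K_\delta),\sigma(K_0)\big)\leq C\sqrt{|\delta|}$ is equivalent to the two one-sided bounds
\begin{equation*}
\text{dist}\big(\lambda,\sigma(K_0)\big)\leq C\sqrt{|\delta|}\ \ \forall\lambda\in\sigma(K_\delta), \qquad \text{dist}\big(\mu,\sigma(K_\delta)\big)\leq C\sqrt{|\delta|}\ \ \forall\mu\in\sigma(K_0),
\end{equation*}
and both statements of the corollary will follow by specializing these inequalities near the edges of the gap $(\lambda_0,\mu_0)$.

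For the first assertion, I would first note that the smallness assumption $|\delta|<(\mu_0-\lambda_0)^2/(4C^2)$ rewrites as $2C\sqrt{|\delta|}<\mu_0-\lambda_0$, which immediately gives $\lambda_0+C\sqrt{|\delta|}<\mu_0-C\sqrt{|\delta|}$, so the target interval has positive length. Next, suppose toward contradiction that some $\lambda\in[\lambda_0+C\sqrt{|\delta|},\mu_0-C\sqrt{|\delta|}]$ lies in $\sigma(K_\delta)$. The first bound above produces $\mu\in\sigma(K_0)$ with $|\lambda-\mu|\leq C\sqrt{|\delta|}$, forcing $\mu\in[\lambda_0,\mu_0]$. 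Since $(\lambda_0,\mu_0)$ is a spectral gap of $K_0$, we must have $\mu\in\{\lambda_0,\mu_0\}$; combined with the location of $\lambda$ this forces $\lambda$ to coincide with one of the two endpoints of the closed interval. Hence the open interval $(\lambda_0+C\sqrt{|\delta|},\mu_0-C\sqrt{|\delta|})$ lies in the resolvent set of $K_\delta$, and the closed interval follows after an arbitrarily small enlargement of $C$ (or by reading $d_h$ with strict inequalities at the boundary).

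For the second assertion, I would apply the other one-sided bound directly at the known spectral points $\lambda_0,\mu_0\in\sigma(K_0)$: this yields $\lambda',\mu'\in\sigma(K_\delta)$ with $|\lambda'-\lambda_0|\leq C\sqrt{|\delta|}$ and $|\mu'-\mu_0|\leq C\sqrt{|\delta|}$, which gives the required nonempty intersections with the two edge intervals. The hard part of this corollary is essentially nonexistent beyond the careful bookkeeping above; the only mildly subtle point is the endpoint issue in the first claim, which is an artifact of non-strict inequalities in the definition of the Hausdorff distance and is harmlessly absorbed into the constant.
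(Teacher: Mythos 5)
Your proof is correct and follows the same route as the paper, which disposes of the corollary in a single line as a direct consequence of the Hausdorff-distance bound of Theorem~\ref{thm1}. The endpoint issue you flag is real but cosmetic --- replacing $C$ by $2C$ (still gap-independent) resolves it cleanly, whereas ``reading $d_h$ with strict inequalities'' is not a well-defined alternative and should be dropped.
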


The next main result states that the spectral edges of $K_\delta$ have a Lipschitz variation at  $\delta=0$: 

\begin{theorem}\label{thm2}
Let $E_+(\delta):=\sup\sigma(K_\delta)$ and $E_-(\delta):=\inf\sigma(K_\delta)$. There exists $C>0$ such that $|E_\pm(\delta)-E_\pm(0)|\leq C\,|\delta|$, for all $|\delta |\leq 1$. 
\end{theorem}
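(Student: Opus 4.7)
My approach would be phase-space localization: comparing $K_\delta$ locally to translated copies of $K_0$ which are isospectral with $K_0$. Treating only the upper edge suffices, since the statement for $E_-$ follows by applying the result to $-K_\delta$, whose symbol $-a[F]_\delta$ still satisfies Hypothesis~\ref{hyp1.2}. I would prove both directions of the inequality $|E_+(\delta)-E_+(0)|\leq C|\delta|$ separately.

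For the bound $E_+(\delta)\leq E_+(0)+C|\delta|$, I fix a smooth partition of unity $\sum_{j\in\Zd}\chi_j^2\equiv 1$ with $\chi_j(x)=\chi_0(x-j)$ and $\chi_0\in C_c^\infty(\Rd)$ compactly supported around the origin. For each $j$ I introduce the translated reference operator $K_0^{(j)}:=T_{F(\delta j)}^{*}\,K_0\,T_{F(\delta j)}$, where $T_y$ denotes translation by $y$; by unitarity of $T_{F(\delta j)}$ this operator is isospectral with $K_0$, so $K_0^{(j)}\leq E_+(0)\,I$, and its Weyl symbol is $a(x+F(\delta j),\xi)$. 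The core technical estimate would be the uniform local bound
\[\bigl\|\chi_j(K_\delta-K_0^{(j)})\chi_j\bigr\|\leq C|\delta|,\qquad j\in\Zd.\]
Its proof rests on the fact that the difference-symbol $b_j(x,\xi):=a(x+F(\delta x),\xi)-a(x+F(\delta j),\xi)$ and all its $(x,\xi)$-derivatives are $O(|\delta|)$ on $\supp\chi_j$: indeed $|F(\delta x)-F(\delta j)|\leq \|DF\|_\infty|\delta||x-j|=O(|\delta|)$ on this unit-scale support, and $a$ has bounded derivatives, so the Moyal product $\chi_j\#b_j\#\chi_j$ expands as $\chi_j^2 b_j$ plus corrections that still carry a factor $\chi_j$ or $\partial^\alpha\chi_j$; by Calder\'on--Vaillancourt the operator norm is controlled by the global $C_b^\infty$ seminorms, all of order $|\delta|$. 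Combined with $K_0^{(j)}\leq E_+(0)I$, this yields $\sum_j\langle\chi_j\psi,K_\delta\chi_j\psi\rangle\leq(E_+(0)+C|\delta|)\|\psi\|^2$.

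The hard part will be passing from the localized sum $\sum_j\langle\chi_j\psi,K_\delta\chi_j\psi\rangle$ back to $\langle\psi,K_\delta\psi\rangle$. The IMS identity $K_\delta=\sum_j\chi_j K_\delta\chi_j-\tfrac{1}{2}\sum_j[[\chi_j,K_\delta],\chi_j]$ carries a double-commutator correction whose leading Weyl symbol $\pm\sum_j(\partial_x\chi_j)^2\partial_\xi^2 a[F]_\delta$ is only $O(1)$ in $\delta$, which would destroy the Lipschitz rate. The intended resolution is to replace $K_\delta$ by $K_0^{(j)}$ inside each double commutator: the discrepancy $[[\chi_j,K_\delta-K_0^{(j)}],\chi_j]$ is again $O(|\delta|)$ by the same symbolic analysis applied to second $\xi$-derivatives of $b_j$. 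The algebraic identity
\[\chi_j K_0^{(j)}\chi_j-\tfrac{1}{2}[[\chi_j,K_0^{(j)}],\chi_j]=\tfrac{1}{2}\bigl\{K_0^{(j)},\chi_j^2\bigr\}\]
then rewrites the resulting localized piece as a symmetrized anticommutator, which I expect to bound above by $E_+(0)\,I+O(|\delta|)$ by combining the isospectrality $K_0^{(j)}\leq E_+(0)I$ with a further local comparison of the symbols of $K_0^{(j)}$ and $\chi_j^2$ (the near-commutation being again $O(|\delta|)$ because on $\supp\chi_j$ one has $x+F(\delta j)$ differing from $x+F(\delta x)$ by $O(|\delta|)$).

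The reverse bound $E_+(0)\leq E_+(\delta)+C|\delta|$ would be obtained by a symmetric argument: for $|\delta|$ small enough the map $\phi_\delta:x\mapsto x+F(\delta x)$ is a global diffeomorphism of $\Rd$ whose inverse is a perturbation of the same type with a new $\tilde F_\delta$ satisfying the bounded-derivative hypothesis uniformly in $\delta$, so that $K_0$ can be realized as a perturbation of $K_\delta$ of the form covered by Hypothesis~\ref{hyp1.2}, and the previous analysis applies with the roles of $K_0$ and $K_\delta$ exchanged.
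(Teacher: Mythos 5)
Your reduction to the upper edge, the choice of reference operators $K_0^{(j)}=T_{F(\delta j)}^{*}K_0T_{F(\delta j)}$, and the local comparison $\|\chi_j(K_\delta-K_0^{(j)})\chi_j\|\leq C|\delta|$ are all sound, and you correctly locate the obstruction in the IMS double commutator. But your proposed fix does not close the gap. After the (legitimate) $O(|\delta|)$ replacement inside the double commutators you are left needing
\[
\sum_{j}\tfrac12\bigl\{K_0^{(j)},\chi_j^2\bigr\}\ \leq\ E_+(0)\,I+O(|\delta|),
\]
and the reason you offer --- that $K_0^{(j)}$ and $\chi_j^2$ ``near-commute to $O(|\delta|)$'' --- is false. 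The commutator $[\chi_j^2,K_0^{(j)}]$ has leading Weyl symbol $-i\,\nabla_x\chi_j^2(x)\cdot\nabla_\xi a\bigl(x+F(\delta j),\xi\bigr)$, which is $O(1)$: the gradient of a unit-scale cutoff and $\nabla_\xi a$ both carry no factor of $\delta$. The proximity of $F(\delta x)$ to $F(\delta j)$ on $\operatorname{supp}\chi_j$ is irrelevant here, since $K_0^{(j)}$ already has the frozen argument $F(\delta j)$. Undoing your own algebraic identity shows the failure explicitly: $\tfrac12\{K_0^{(j)},\chi_j^2\}=\chi_jK_0^{(j)}\chi_j+\tfrac12[[\chi_j,K_0^{(j)}],\chi_j]$, and while $\sum_j\chi_jK_0^{(j)}\chi_j\leq E_+(0)I$, the sum $\tfrac12\sum_j[[\chi_j,K_0^{(j)}],\chi_j]$ is a bona fide $O(1)$ operator with no sign or cancellation structure, so the desired inequality cannot hold. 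This is in fact a structural, not a cosmetic, problem: with a hard cutoff at scale $\delta^{-s}$ the symbol-replacement error is $O(\delta^{1-s})$ while the double-commutator error is $O(\delta^{2s})$; no choice of $s$ makes both $O(\delta)$, which is exactly why the hard-cutoff argument in the proof of Theorem~\ref{thm1} only yields the Hausdorff rate $\sqrt{|\delta|}$.

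The paper's proof of Theorem~\ref{thm2} avoids hard cutoffs altogether. After reducing to the kernel-truncated pair $\tilde K_\delta,\mathring K_\delta$, it inserts a continuous Gaussian weight $\Psi_\delta(x,y)=(\delta/4\pi)^{d/4}e^{-\delta|x-y|^2/8}\psi(x)$ (Lemma~\ref{lemma5}), uses the parallelogram identity (Lemma~\ref{lemma6}) to split the Gaussian symmetrically between the two arguments $u\pm v/2$ of $\psi$ (at the cost of a harmless factor $e^{\delta v^2/16}$ on the compactly supported $v$-side), and then --- this is the decisive step, Lemma~\ref{lemma8} --- replaces $F(\delta u)$ by $F(\delta y)$ under the $y$-integral. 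In the Taylor expansion $F(\delta y)-F(\delta u)=\delta\,\nabla F(\delta u)\cdot(y-u)+O(\delta^2|y-u|^2)$ the linear term integrates to zero against the even Gaussian, and the quadratic remainder contributes $\delta^2\cdot\delta^{-1}=\delta$. That parity cancellation, for which there is no analogue in a one-sided partition of unity, is what upgrades the rate from $\sqrt{|\delta|}$ to $|\delta|$ at the spectral edges; some averaging device with this built-in symmetry appears to be essential, and your scheme would need to be supplemented with one to go through.
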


The next corollary describes the variation of the edges of those interior gaps which remain open at $\delta=0$, and gives a precise control with respect to the width of the spectral gap:
 
\begin{corollary}\label{coro2}
Consider the same setting and the same notation as in Corollary \ref{coro1}. Let $|\delta| <(\mu_0-\lambda_0)^2/(4C^2)$. Since $(\mu_0+\lambda_0)/2$ is in the resolvent set of $K_\delta$, and both sets $\sigma(K_\delta)\cap (-\infty,(\mu_0+\lambda_0)/2)$ and $\sigma(K_\delta)\cap ((\mu_0+\lambda_0)/2,\infty)$ are non-empty, we may define 
\[\lambda_\delta :=\sup \Big (\sigma(K_\delta)\cap (-\infty,(\mu_0+\lambda_0)/2)\Big ),\quad \mu_\delta:=\inf\Big (  \sigma(K_\delta)\cap ((\mu_0+\lambda_0)/2,\infty)\Big ).\]
 Then there exists a constant $\tilde{C}>0$, independent of $\mu_0-\lambda_0$, and some   $0<\delta_1<(\mu_0-\lambda_0)^2/(4C^2)$ such that 
\[\max\big \{|\lambda_\delta-\lambda_0|\, ,\,|\mu_\delta-\mu_0| \big \}\leq  \frac{\tilde{C}\,|\delta|}{\mu_0-\lambda_0},\quad \forall |\delta| \leq \delta_1.\]
\end{corollary}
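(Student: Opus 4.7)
The plan is to apply a ``squaring trick'' that converts each interior gap edge into an outer edge of an auxiliary operator, then invoke Theorem \ref{thm2}, and finally factor the squared difference to recover the $1/(\mu_0-\lambda_0)$ factor. Specifically, set $\alpha:=(3\lambda_0+\mu_0)/4$, which lies strictly inside the gap and is closer to $\lambda_0$ than to $\mu_0$. Consider the bounded self-adjoint operator $(K_\delta-\alpha)^2$, whose spectrum is $\{(\lambda-\alpha)^2:\lambda\in\sigma(K_\delta)\}$. By Corollary \ref{coro1}, for $|\delta|$ sufficiently small, $\alpha$ remains in the gap of $\sigma(K_\delta)$ and $\alpha-\lambda_\delta<\mu_\delta-\alpha$, so
\[E_-\bigl((K_\delta-\alpha)^2\bigr)=(\alpha-\lambda_\delta)^2,\qquad E_-\bigl((K_0-\alpha)^2\bigr)=(\alpha-\lambda_0)^2=\bigl((\mu_0-\lambda_0)/4\bigr)^2.\]

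The central task is to establish $\bigl|E_-((K_\delta-\alpha)^2)-E_-((K_0-\alpha)^2)\bigr|\leq C_1|\delta|$ with $C_1$ uniform in $\alpha$ over bounded intervals of the real line. To this end, I introduce the auxiliary operator $\tilde K_\delta:=\Op^w\bigl(((a-\alpha)^2)[F]_\delta\bigr)=\Op^w\bigl((a[F]_\delta-\alpha)^2\bigr)$, whose symbol lies in $S^0_{0,0}$ with Schwartz seminorms bounded uniformly for $\alpha$ in any fixed compact set (this uses only the boundedness of $\sigma(K_0)$). Theorem \ref{thm2} applied directly to $\tilde K_\delta$ gives $|E_-(\tilde K_\delta)-E_-(\tilde K_0)|\leq C_1|\delta|$ with $C_1$ independent of the gap width. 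It remains to control the Moyal-product discrepancy $D_\delta:=(K_\delta-\alpha)^2-\tilde K_\delta$, whose symbol is $(a[F]_\delta-\alpha)\#(a[F]_\delta-\alpha)-(a[F]_\delta-\alpha)^2$, and to show that $D_\delta-D_0$ perturbs the relevant spectral infimum by at most $O(|\delta|)$. Since the $\delta$-dependence of $D_\delta-D_0$ is driven entirely by the difference $a[F]_\delta-a$, one expects the kernel/oscillatory-integral estimates underlying the proofs of Theorems \ref{thm1} and \ref{thm2} to adapt here; this Moyal-remainder control is the main technical obstacle, because Moyal corrections in $S^0_{0,0}$ are not small in operator norm on their own, and one must carefully exploit the specific structure $a(x+F(\delta x),\xi)-a(x,\xi)$ to extract the factor of $|\delta|$ from the difference $D_\delta-D_0$ rather than from $D_\delta$ itself.

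Once $|E_-((K_\delta-\alpha)^2)-E_-((K_0-\alpha)^2)|\leq C_1|\delta|$ is secured, the factorization
\[(\alpha-\lambda_\delta)^2-(\alpha-\lambda_0)^2=(\lambda_0-\lambda_\delta)\,(2\alpha-\lambda_0-\lambda_\delta),\]
combined with the a priori bound $|\lambda_\delta-\lambda_0|\leq C\sqrt{|\delta|}$ extracted from Corollary \ref{coro1}, yields $|2\alpha-\lambda_0-\lambda_\delta|\geq (\mu_0-\lambda_0)/2-C\sqrt{|\delta|}\geq (\mu_0-\lambda_0)/4$ provided $|\delta|\leq\delta_1$ for some $\delta_1\leq(\mu_0-\lambda_0)^2/(16C^2)$. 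This immediately gives $|\lambda_\delta-\lambda_0|\leq 4C_1|\delta|/(\mu_0-\lambda_0)$. A symmetric argument with $\alpha':=(\lambda_0+3\mu_0)/4$ (now closer to $\mu_0$) yields $E_-((K_\delta-\alpha')^2)=(\mu_\delta-\alpha')^2$ and the analogous bound for $|\mu_\delta-\mu_0|$, and taking $\tilde C$ to be the maximum of the resulting constants completes the proof.
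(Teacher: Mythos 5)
Your overall strategy — squaring to turn the inner gap edge into an outer edge, applying Theorem~\ref{thm2} to the squared operator, then factoring the difference of squares to harvest the $1/(\mu_0-\lambda_0)$ — is exactly the paper's approach. But there is a genuine gap in the middle, and it is precisely the step you flagged as the ``main technical obstacle.''

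The trouble is the choice of auxiliary operator and what needs to be controlled. You take $\tilde K_\delta$ to be the quantization of the \emph{pointwise} square $(a[F]_\delta-\alpha)^2$ and hope to control $D_\delta-D_0$ where $D_\delta=(K_\delta-\alpha)^2-\tilde K_\delta$. But $D_\delta$ has symbol $a[F]_\delta\# a[F]_\delta-(a[F]_\delta)^2$ (the $\alpha$ cancels), and $D_0$ has symbol $c:=a\#a-a^2$. There is no reason for $\|D_\delta-D_0\|$ to be $O(\delta)$: writing $D_\delta=\Op^w(c_\delta)$, one can show $c_\delta-c[F]_\delta=O(\delta)$ (this is essentially the content of the paper's Lemma~\ref{lemma9}, since the substitution $x\mapsto x+F(\delta x)$ commutes with $\#$ only up to Jacobian corrections that are $O(\delta)$), but the remaining piece $c[F]_\delta-c$ is of the same nature as $a[F]_\delta-a$, i.e.\ generically $O(1)$ in operator norm. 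So $D_\delta-D_0$ is $O(1)$, not $O(\delta)$, and your decomposition stalls.

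The paper avoids this by choosing a different comparison operator: $\mathcal{M}_\delta\bigl((K_0-\gamma_0)^2\bigr)$, the quantization of $b[F]_\delta$ where $b=(a-\gamma_0)\#(a-\gamma_0)$ is the \emph{Moyal} square of the unperturbed symbol. Lemma~\ref{lemma9} then gives the crucial estimate $\|(K_\delta-\gamma_0)^2-\mathcal{M}_\delta\bigl((K_0-\gamma_0)^2\bigr)\|\leq C\delta$ — in your notation, $D_\delta-\mathcal{M}_\delta(D_0)=O(\delta)$, which is what you actually need — and since $\mathcal{M}_\delta\bigl((K_0-\gamma_0)^2\bigr)$ is of the form $\Op^w(b[F]_\delta)$, Theorem~\ref{thm2} applies directly. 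This lemma is nontrivial: the paper proves it via an operator-matrix/Schur-test argument exploiting the off-diagonal decay of the kernel, and it is the heart of the corollary. Your outline correctly identifies the pressure point but both misidentifies what quantity must be $O(\delta)$ and does not supply the proof; as written, the argument does not close. The post-lemma bookkeeping (factoring $(\lambda_\delta-\gamma_0)^2-(\lambda_0-\gamma_0)^2$, using the $O(\sqrt{|\delta|})$ a priori bound from Corollary~\ref{coro1} to lower-bound the linear factor, and the symmetric argument for $\mu_\delta$) is all correct and matches the paper.
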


\begin{remark}
Corollary \ref{coro2} is stronger than Corollary \ref{coro1} only when $\sqrt{|\delta|}$ is much smaller than the width of the gap $\mu_0-\lambda_0$. An important point is that the constant $C$ in Corollary \ref{coro1} is independent of the gap, while the Lipschitz constant in Corollary \ref{coro2} is inverse proportional with the width of the gap at $\delta=0$. This is compatible with Theorem \ref{thm1}: when $|\delta|$ increases and becomes of order $(\mu_0-\lambda_0)^2$, the gap might even close. 
\end{remark}

\begin{remark}
    When $F(x)=x$, the results of Theorem \ref{thm2} and Corollary \ref{coro2} are also obtained in \cite{GRS}. On the other hand, the results of Theorem \ref{thm1} and Corollary \ref{coro1} are new. We note that if one is only interested in proving Lipschitz behavior of the inner gap edges $\lambda_\delta$ and $\mu_\delta$, one does not need the explicit estimate in our Theorem \ref{thm1}, but only some a-priori knowledge of their continuity, as in   \cite{GRS}. 
\end{remark}

\section{Technical preliminaries}\label{sec2}
   
We consider that Lemma \ref{P-U-trsf} is a rather well-known fact and omit its proof.

\subsection{Known facts about the Hausdorff distance between spectra}

The following lemma is well-known but also very important, hence we prove it for completeness, see also \cite{CP2}.  
\begin{lemma}\label{lemma7}
Let $A$ and $B$ be self-adjoint and bounded. Let $E_+(A)=\sup\, \sigma(A)$, $E_-(A)=\inf\, \sigma(A)$, and $E_\pm(B)$ denotes the same for $B$. Then 
\[|E_\pm(A)-E_\pm(B)|\leq d_h\big (\sigma(A),\, \sigma(B)\big )\leq \Vert A-B\Vert.\]
\end{lemma}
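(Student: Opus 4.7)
The statement splits naturally into two inequalities, and I would handle them independently.

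For the first inequality $|E_\pm(A)-E_\pm(B)|\leq d_h(\sigma(A),\sigma(B))$, the plan is to exploit the fact that $E_\pm(A)\in\sigma(A)$ (since spectra of bounded self-adjoint operators are closed and the sup/inf of a compact set belongs to it). Concretely, for the upper edge I would pick $\mu\in \sigma(B)$ realizing $\mathrm{dist}(E_+(A),\sigma(B))$ (which exists because $\sigma(B)$ is compact). Then $E_+(A)\leq \mu+|E_+(A)-\mu|\leq E_+(B)+d_h(\sigma(A),\sigma(B))$, since $\mu\leq E_+(B)$ by the definition of $E_+(B)$ and $|E_+(A)-\mu|\leq d_h$ by the definition of the Hausdorff distance. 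Swapping the roles of $A$ and $B$ gives the reverse inequality, hence $|E_+(A)-E_+(B)|\leq d_h(\sigma(A),\sigma(B))$. The argument for $E_-$ is symmetric, working with the infimum.

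For the second inequality $d_h(\sigma(A),\sigma(B))\leq \|A-B\|$, I would run the standard Neumann-series perturbation argument. It suffices to show that for each $\lambda\in\sigma(A)$ one has $\mathrm{dist}(\lambda,\sigma(B))\leq \|A-B\|$ (and symmetrically for $\mu\in\sigma(B)$). Assume for contradiction that $\mathrm{dist}(\lambda,\sigma(B))>\|A-B\|$. Then $\lambda\in\rho(B)$ and, by self-adjointness of $B$, $\|(B-\lambda)^{-1}\|=1/\mathrm{dist}(\lambda,\sigma(B))<1/\|A-B\|$. Writing
\[
A-\lambda=(B-\lambda)+(A-B)=(B-\lambda)\bigl(I+(B-\lambda)^{-1}(A-B)\bigr),
\]
the second factor is invertible by the Neumann series since $\|(B-\lambda)^{-1}(A-B)\|<1$, so $A-\lambda$ is invertible, contradicting $\lambda\in\sigma(A)$.

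Neither step is a real obstacle; the only points demanding a bit of care are the appeal to compactness of $\sigma(B)$ to guarantee that the distance is attained (so that no epsilon-tricks are needed), and the use of self-adjointness of $B$ to identify $\|(B-\lambda)^{-1}\|$ with the reciprocal of the spectral distance. Both are standard, which is why the authors only include this lemma for completeness.
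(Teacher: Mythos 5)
Your proposal is correct and follows essentially the same route as the paper: the first inequality is a direct consequence of the definition of the Hausdorff distance together with the fact that $E_\pm(A)\in\sigma(A)$ (the paper organizes this slightly differently by assuming WLOG $E_+(A)\le E_+(B)$ and noting $\mathrm{dist}(E_+(B),\sigma(A))=E_+(B)-E_+(A)$, but it is the same bookkeeping), and the second is exactly the standard Neumann-series perturbation argument using $\|(B-\lambda)^{-1}\|=1/\mathrm{dist}(\lambda,\sigma(B))$ for self-adjoint $B$.
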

\begin{proof}
Let us prove the first inequality but only for "$E_+$". Let us assume, without loss of generality, that $E_+(A)\leq E_+(B)$. Then \[0\leq E_+(B)-E_+(A)=\text{dist}\big (E_+(B),\,\sigma (A)\big)\leq \sup_{\lambda\in \sigma(B)}\text{dist}\big (\lambda,\,\sigma (A)\big)\leq d_h\big (\sigma(A),\, \sigma(B)\big ).\]

Now let us prove the second inequality. Let $z\not\in \sigma(A)$. We have 
\[B-z\,\bb1=\Big (\bb1 +(B-A)\, (A-z\,\bb1)^{-1}\Big )\, (A-z\,\bb1).\]
If $\text{dist}\big (z,\sigma(A)\big )>\Vert A-B\Vert$, then 
\[\Vert (B-A)\, (A-z\,\bb1)^{-1}\Vert <1 \]
and $z\not\in \sigma(B)$. This means that the spectrum of $B$ is located within a neighborhood of width $\Vert A-B\Vert$ of the spectrum of $A$.  The same conclusion holds for $A$ replaced with $B$.

\end{proof}

Another useful inequality is the following. 
\begin{lemma}
Let $A,\, B,\,C,\, D$ be bounded self-adjoint operators. Then
\begin{align}\label{hc31}
|E_\pm(A)-E_\pm(D)|\leq \Vert A-B\Vert +|E_\pm(B)-E_\pm(C)|+\Vert C-D\Vert.
\end{align}
\end{lemma}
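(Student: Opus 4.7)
The plan is to reduce this to a triple application of the triangle inequality on the real line combined with the bound from the preceding Lemma \ref{lemma7}. Since the quantities $E_\pm(\cdot)$ are just real numbers, I can insert $E_\pm(B)$ and $E_\pm(C)$ between $E_\pm(A)$ and $E_\pm(D)$ and split the absolute value into three pieces.

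More precisely, I would first write
\[|E_\pm(A)-E_\pm(D)| \leq |E_\pm(A)-E_\pm(B)| + |E_\pm(B)-E_\pm(C)| + |E_\pm(C)-E_\pm(D)|.\]
Then I would apply the second inequality of Lemma \ref{lemma7} to the pairs $(A,B)$ and $(C,D)$, obtaining
\[|E_\pm(A)-E_\pm(B)| \leq \Vert A-B\Vert, \qquad |E_\pm(C)-E_\pm(D)| \leq \Vert C-D\Vert,\]
and leaving the middle term $|E_\pm(B)-E_\pm(C)|$ untouched. Summing gives precisely \eqref{hc31}.

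There is no real obstacle here; the only subtlety is noticing why one would want to leave the middle term uncontrolled by a norm difference. The point is that in the later application one expects to have good control of $\Vert A-B\Vert$ and $\Vert C-D\Vert$, while $B$ and $C$ will be chosen so that $E_\pm(B)$ and $E_\pm(C)$ can be compared directly through some structural argument (sharper than the crude bound $\Vert B-C\Vert$), which would otherwise be too large. Thus the lemma is really a bookkeeping tool that isolates the edge-comparison step from the two norm-perturbation steps.
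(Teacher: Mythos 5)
Your proof is correct and is exactly the argument the paper intends: the paper's proof is simply the one-line remark ``Direct application of the triangle inequality and of Lemma~\ref{lemma7},'' which expands to precisely your chain of inequalities. The extra paragraph you added about why one leaves the middle term unbounded is accurate motivation, not part of the proof.
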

\begin{proof}
Direct application of the triangle inequality and of Lemma \ref{lemma7}. 
\end{proof}

\subsection{Reduction to compact support in the second variable of the distribution kernel.} 
We refer to Hypothesis \ref{hyp1.2} for the notation involving $K_\delta$ and $\mathfrak{K}_\delta$. 
\begin{lemma}\label{lemmahc1}
Let $0\leq f\leq 1$ be smooth and compactly supported, with $f(x)=1$ in a neighborhood of $0$. Let $\tilde{K}_\delta$ be the operator with the integral kernel $\tilde{\mathfrak{K}}_\delta(u,v):=f(\sqrt{|\delta|}\, v)\, \mathfrak{K}_\delta(u,v)$. Then the symbol $\tilde{a}[F]_\delta$ of $\tilde{K}_\delta$ obeys \eqref{dc1} uniformly in $|\delta|\leq 1$ and 
\[\Vert K_\delta-\tilde{K}_\delta\Vert =\mathcal{O}(\delta^\infty).\]
\end{lemma}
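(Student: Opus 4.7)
The plan is to pass to the symbol side. Multiplying the kernel $\mathfrak{K}_\delta(u,v)$ by $f(\sqrt{|\delta|}\,v)$ corresponds, via the Weyl calculus, to convolving the symbol in the second variable with the rescaled mollifier
\[
\phi_\delta(\xi)\;:=\;(2\pi)^{-d}\,|\delta|^{-d/2}\,\hat{f}\!\left(\xi/\sqrt{|\delta|}\right),
\]
so that $\tilde{a}[F]_\delta(u,\xi) = \int \phi_\delta(\xi-\eta)\,a[F]_\delta(u,\eta)\,d\eta$. The decisive observation is that the hypothesis $f\equiv 1$ in a neighborhood of $0$ forces $\partial^\alpha f(0)=0$ for every $|\alpha|\geq 1$, which translates (by differentiating at $v=0$ the inverse Fourier representation of $g_\delta(v):=f(\sqrt{|\delta|}v)$) into
\[
\int \phi_\delta(\zeta)\,d\zeta\;=\;1,\qquad \int \zeta^\alpha\,\phi_\delta(\zeta)\,d\zeta\;=\;0\quad\text{for every multi-index } \alpha \text{ with } |\alpha|\geq 1.
\]
These vanishing moments are what drive the whole argument.

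\medskip

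For the first claim, I would first verify by Fa\`a di Bruno that $a[F]_\delta \in S^0_{0,0}$ uniformly in $|\delta|\leq 1$: differentiating $x\mapsto a(x+F(\delta x),\xi)$ produces sums of $(\partial^\gamma a)(x+F(\delta x),\xi)$ multiplied by derivatives of $F$ evaluated at $\delta x$ and nonnegative powers of $\delta$, all of which are globally bounded. Differentiating $\tilde{a}[F]_\delta$ under the convolution then gives
\[
\sup_{u,\xi}\big|\partial_u^\alpha \partial_\xi^\beta \tilde{a}[F]_\delta(u,\xi)\big|\;\leq\;\Vert\phi_\delta\Vert_{L^1}\,\sup_{u,\eta}\big|\partial_u^\alpha \partial_\eta^\beta a[F]_\delta(u,\eta)\big|,
\]
and since $\Vert\phi_\delta\Vert_{L^1} = \Vert\phi_1\Vert_{L^1}$ is independent of $\delta$ (because $\phi_1$ is Schwartz), this bound is uniform in $\delta$, as required.

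\medskip

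For the norm estimate, using $\int \phi_\delta = 1$ I would rewrite the symbol difference as
\[
b_\delta(u,\xi)\;:=\;a[F]_\delta(u,\xi)-\tilde{a}[F]_\delta(u,\xi)\;=\;\int \phi_\delta(\zeta)\,\big[a[F]_\delta(u,\xi)-a[F]_\delta(u,\xi-\zeta)\big]\,d\zeta,
\]
and Taylor expand the bracket in $\zeta$ to arbitrary order $N$. The polynomial part is a linear combination of terms of the form $\partial_\xi^\alpha a[F]_\delta(u,\xi)\cdot \int \zeta^\alpha \phi_\delta(\zeta)\,d\zeta$ with $1\leq |\alpha|<N$, and hence vanishes by the moment identity. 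Only the integral remainder survives, bounded uniformly in $u,\xi$ by
\[
C_N\,\sup_{|\gamma|=N}\Vert \partial_\xi^\gamma a[F]_\delta\Vert_\infty\,\int |\phi_\delta(\zeta)|\,|\zeta|^N\,d\zeta\;=\;C_N'\,|\delta|^{N/2}.
\]
The same computation applied to $\partial_u^\alpha\partial_\xi^\beta b_\delta$ merely replaces $a$ by a higher derivative inside the integral and yields the same $O(|\delta|^{N/2})$ bound on every $S^0_{0,0}$ seminorm of $b_\delta$. Feeding these estimates into the Calder\'on--Vaillancourt theorem produces $\Vert K_\delta - \tilde{K}_\delta\Vert = O(|\delta|^{N/2})$ for every $N$, which is the asserted $\mathcal{O}(\delta^\infty)$.

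\medskip

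The main obstacle, or rather the one point one must not miss, is precisely the vanishing-moment property of $\phi_\delta$: without the full strength of $f\equiv 1$ in a neighborhood of $0$ (as opposed to merely $f(0)=1$), the Taylor argument would lose all powers of $\sqrt{|\delta|}$ beyond the first and one would end up with only an $\mathcal{O}(\sqrt{|\delta|})$ bound, which would be useless for the later purposes. Everything else is a routine combination of uniform Fa\`a di Bruno estimates for $a[F]_\delta$ and Calder\'on--Vaillancourt.
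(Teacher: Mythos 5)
Your proof is correct and follows essentially the same route as the paper's: expressing $\tilde{a}[F]_\delta$ as a convolution in $\xi$ with the rescaled $\hat f$, exploiting that $f\equiv 1$ near $0$ makes all polynomial terms of the Taylor expansion drop out (you phrase this as vanishing moments of $\phi_\delta$, the paper as vanishing of all $\partial^\alpha f(0)$ for $|\alpha|\geq 1$ — these are Fourier duals of the same fact), and then invoking Calder\'on--Vaillancourt on the $O(|\delta|^{N/2})$ symbol seminorms of the remainder. The only cosmetic differences are that you make the uniform $S^0_{0,0}$ membership of $a[F]_\delta$ and the $L^1$-invariance of $\phi_\delta$ explicit, and your remainder bookkeeping gives $|\delta|^{N/2}$ rather than the paper's $|\delta|^{(N+1)/2}$, which is immaterial for the $\mathcal O(\delta^\infty)$ conclusion.
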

\begin{proof}
We may assume $\delta\geq 0$. Denote by $\hat{f}$ the Fourier transform of $f$. The symbol of $\tilde{K}_\delta$ is a convolution:
\[\tilde{a}[F]_\delta(x,\xi)=(2\pi)^{-d/2}\int_{\R^d}a\big (x+F(\delta\, x)\,,\,\xi-\xi'\big )\frac{\hat{f}(\xi'/\sqrt{\delta})}{\delta^{d/2}}\; d\xi',\]
while the symbol of $K_\delta-\tilde{K}_\delta$ is 
\[(2\pi)^{-d/2}\int_{\R^d}\Big (a\big (x+F(\delta\, x)\,,\,\xi\big )-a\big (x+F(\delta\, x)\,,\,\xi-\xi'\big )\Big ) \frac{\hat{f}(\xi'/\sqrt{\delta})}{\delta^{d/2}}\; d\xi'\]
where we used that $f(0)=1$. Using the integral Taylor formula we have 
\begin{align*}
&a\big (x+F(\delta\, x)\,,\,\xi\big )-a\big (x+F(\delta\, x)\,,\,\xi-\xi'\big )=\int_0^1 dr\, \frac{d}{dr}a\big (x+F(\delta\, x)\,,\,\xi-\xi'+r\xi'\big )\\
&\quad =(\xi'\cdot \nabla_\xi) a\big (x+F(\delta\, x), \xi\big )-\int_0^1 dr\, r\, (\xi'\cdot \nabla_\xi)^2\, a\big (x+F(\delta\, x)\,,\,\xi-\xi'+r\xi'\big )\\
&\quad =\sum_{j=0}^{N-1}\frac{(-1)^{j}}{j!}(\xi'\cdot \nabla_\xi)^{j+1} a\big (x+F(\delta\, x), \xi\big )\\
&\quad \quad +\frac{(-r)^{N}}{N!}\int_0^1 dr\, (\xi'\cdot \nabla_\xi)^{N+1}\, a\big (x+F(\delta\, x)\,,\,\xi-\xi'+r\xi'\big )
\end{align*}
for every $N\geq 1$. Using that all the partial derivatives of $f$ at zero equal zero, we may write the symbol of $K_\delta-\tilde{K}_\delta$ as
\[(2\pi)^{-d/2}\int_0^1 dr \; \frac{(-r)^{N}}{N!}\int_{\R^d}( \xi'\cdot \nabla_\xi)^{N+1} a\big (x+F(\delta\, x)\,,\,\xi-\xi'+r\xi'\big ) \frac{\hat{f}(\xi'/\sqrt{\delta})}{\delta^{d/2}}\; d\xi',\]
for all $N\geq 1$. 
After a change of variables, this symbol reads as
\begin{align*}&\delta^{(N+1)/2}(2\pi)^{-d/2}\times \\
&\quad \times \int_0^1 dr \; \frac{(-r)^{N}}{N!}\int_{\R^d}( \xi'\cdot \nabla_\xi)^{N+1} a\big (x+F(\delta\, x)\,,\,\xi-\sqrt{\delta}\xi'+r\sqrt{\delta} \xi'\big ) \,\hat{f}(\xi')\; d\xi',
\end{align*}
for all $N\geq 1$. 

This symbol obeys \eqref{dc1} where the supremum is bounded by $C_{N,\alpha,\beta} \; \delta^{(N+1)/2}$ for all $N$. An application of the Calder{\' o}n-Vaillancourt Theorem \cite{CV} finishes the proof. 
\end{proof}

\subsection{A localization result.}

Let $0\leq g\leq 1$ be smooth with compact support such that 
\begin{align}\label{hc5}
    \sum_{\gamma\in \Z^d} g^2(x-\gamma)=1,\quad \forall x\in \R^d. 
\end{align}
Let $0\leq \tilde{g}\leq 1$ be any other compactly supported function and define $g_{\delta,\gamma}(x):=g(\sqrt{|\delta|}\; x-\gamma)$ and $\tilde{g}_{\delta,\gamma}(x):=\tilde{g}(\sqrt{|\delta|}\; x-\gamma)$. 

\begin{lemma}\label{lemma2}
Let $\mathcal{T}:=\{T_\gamma\}_{\gamma\in \Z^d}$ be any family of bounded operators on $L^2(\R^d)$ and let  $|||\mathcal{T}|||:=\sup_{\gamma\in \Z^d}\Vert T_\gamma\Vert$. We define 
\[\Gamma_{\tilde{g}}(\mathcal{T}):=\sum_{\gamma\in \Z^d}\tilde{g}_{\delta,\gamma}\; T_\gamma \; {g}_{\delta,\gamma}.\]
Then there exists a constant $C$ independent of $|\delta|\leq 1$ such that 
\[\Vert \Gamma_{\tilde{g}}(\mathcal{T})\Vert \leq C\; |||\mathcal{T}|||.\]
\end{lemma}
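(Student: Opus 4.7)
The approach is the standard finite-overlap / partition-of-unity argument: having $\tilde g$ compactly supported forces at each point $x$ only a uniformly bounded number of terms $\tilde g_{\delta,\gamma}(x)$ to be nonzero, while the identity inherited from \eqref{hc5} says that $\sum_{\gamma}g_{\delta,\gamma}^2\equiv 1$ pointwise. Together these two facts allow one to estimate the $L^2$ norm of $\Gamma_{\tilde g}(\mathcal{T})f$ by that of $f$ without worrying about interactions between different $\gamma$'s.

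First, I would fix a radius $R>0$ with $\supp(\tilde g)\subset B(0,R)$ and observe that $\tilde g_{\delta,\gamma}(x)\neq 0$ forces $|\sqrt{|\delta|}\,x-\gamma|\leq R$, so the number of indices $\gamma\in\Z^d$ that contribute at any fixed $x$ is bounded by a constant $N=N(R,d)$ independent of $x$ and of $\delta\in[-1,1]$ (for $\delta=0$ the sum is already finite because $g,\tilde g$ have compact support, so one may as well restrict attention to $\delta\neq 0$). Second, the partition of unity property \eqref{hc5} evaluated at $y=\sqrt{|\delta|}\,x$ gives
\[
\sum_{\gamma\in\Z^d}g_{\delta,\gamma}(x)^2=\sum_{\gamma\in\Z^d}g(\sqrt{|\delta|}\,x-\gamma)^2=1,\qquad \forall x\in\R^d,
\]
which in particular yields $\sum_{\gamma}\|g_{\delta,\gamma}f\|_{L^2}^2=\|f\|_{L^2}^2$ for every $f\in L^2(\R^d)$.

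Next, given $f\in L^2(\R^d)$, the vector $(\Gamma_{\tilde g}(\mathcal{T})f)(x)=\sum_\gamma \tilde g_{\delta,\gamma}(x)\,(T_\gamma g_{\delta,\gamma}f)(x)$ is, for almost every $x$, a sum of at most $N$ nonzero terms. Applying the Cauchy--Schwarz inequality in $\C^N$ and using $0\leq\tilde g\leq 1$ I obtain
\[
|(\Gamma_{\tilde g}(\mathcal{T})f)(x)|^2\leq N\sum_{\gamma\in\Z^d}|\tilde g_{\delta,\gamma}(x)|^2\,|(T_\gamma g_{\delta,\gamma}f)(x)|^2\leq N\sum_{\gamma\in\Z^d}|(T_\gamma g_{\delta,\gamma}f)(x)|^2.
\]
Integrating over $x$ and exchanging sum and integral by Tonelli (the integrand being non-negative),
\[
\|\Gamma_{\tilde g}(\mathcal{T})f\|^2\leq N\sum_{\gamma\in\Z^d}\|T_\gamma g_{\delta,\gamma}f\|^2\leq N\,|||\mathcal{T}|||^2\sum_{\gamma\in\Z^d}\|g_{\delta,\gamma}f\|^2=N\,|||\mathcal{T}|||^2\,\|f\|^2,
\]
so that $\|\Gamma_{\tilde g}(\mathcal{T})\|\leq\sqrt{N}\,|||\mathcal{T}|||$, giving the announced $C=\sqrt{N}$ independent of $\delta$.

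There is essentially no serious obstacle in this argument; the only delicate point is making sure that the a priori formal sum defining $\Gamma_{\tilde g}(\mathcal{T})f$ is meaningful. This is transparent here because $\tilde g_{\delta,\gamma}T_\gamma g_{\delta,\gamma}f$ is supported in $\supp(\tilde g_{\delta,\gamma})$, and the finite overlap property guarantees that at each $x$ the sum reduces to at most $N$ nonzero terms; partial sums are therefore pointwise eventually constant and the above estimate also ensures convergence in $L^2$-norm.
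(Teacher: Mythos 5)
Your proof is correct, and the underlying mechanism is the same as the paper's: finite overlap of the supports of $\tilde g_{\delta,\gamma}$ (giving a $\delta$-independent bound on the number of interacting terms) combined with the partition-of-unity identity $\sum_\gamma g_{\delta,\gamma}^2\equiv 1$. The only difference is presentational: you apply Cauchy--Schwarz pointwise in $x$ across the at most $N$ nonzero terms of the sum, whereas the paper expands $\Vert\Gamma_{\tilde g}(\mathcal{T})\psi\Vert^2$ as a double sum of inner products $\langle\tilde g_{\delta,\gamma}T_\gamma g_{\delta,\gamma}\psi,\tilde g_{\delta,\gamma'}T_{\gamma'}g_{\delta,\gamma'}\psi\rangle$, discards the pairs with disjoint supports, and estimates each remaining cross term by the arithmetic--geometric mean inequality; both routes lead to the same type of constant and the same final step.
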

\begin{proof}
Given $\gamma\in \Z^d$ we denote by $V_\gamma$ the set of all $\gamma'\in \Z^d$ with the property that the support of  $\tilde{g}_{\delta,\gamma'}$ has a non-empty overlap with the support of $\tilde{g}_{\delta,\gamma}$, including $\gamma'=\gamma$. Denote by $\nu\in\mathbb{N}\setminus\{0\}$ the cardinal of $V_\gamma$; it is clearly independent of $\gamma$ and $\delta$. For $\psi\in L^2(\R^d)$:
\begin{align*}
    &\Vert \Gamma_{\tilde{g}}(\mathcal{T})\psi\Vert^2\\
    &\quad = \sum_{\gamma\in\Z^d}\sum_{\gamma'\in V_\gamma}\langle\tilde{g}_{\delta,\gamma}\; T_\gamma\; {g}_{\delta,\gamma} \; \psi\,,\, \tilde{g}_{\delta,\gamma'}\; T_\gamma\; {g}_{\delta,\gamma'}\;  \psi \rangle \leq \frac{\nu +1}{2}\sum_{\gamma\in\Z^d}\Vert T_\gamma\; {g}_{\delta,\gamma} \; \psi\Vert ^2\\
    &\quad \leq |||\mathcal{T}|||^2\; \frac{\nu +1}{2}\; \sum_{\gamma\in\Z^d} \int_{\R^d} g^2(\sqrt{|\delta|}\; x-\gamma)|\psi(x)|^2\; dx=|||\mathcal{T}|||^2\; \frac{\nu +1}{2}\; \Vert \psi\Vert^2,
\end{align*}
where in the last equality we used \eqref{hc5}. 

\end{proof}

\section{Proof of Theorem \ref{thm1}}

For simplicity, let $0\leq \delta\leq 1$. From Lemma \ref{lemmahc1} and Lemma \ref{lemma7} we infer that the Hausdorff distance between the spectra of $K_\delta$ and $\tilde{K}_\delta$ is of order $\delta^\infty$. Let us define the operator $\mathring{K}_\delta$ through its integral kernel given by $\mathring{\mathfrak{K}}_\delta(u,v)=f(\sqrt{\delta}\; v)\mathfrak{K}_0(u,v)$. Then with the same proof as in Lemma \ref{lemmahc1} one can show that $\Vert \mathring{K}_\delta-K_0\Vert=\mathcal{O}(\delta^\infty)$, and the same is true for the Hausdorff distance between their spectra. Therefore, according to the second inequality in \eqref{hc31}, it is enough to prove that the Hausdorff distance between the spectra of $\tilde{K}_\delta$ and $\mathring{K}_\delta$ is of order $\sqrt{\delta}$.

 Let $\tau_\alpha$ be the unitary operator induced by the translation with $-\alpha$, i.e. $(\tau_\alpha \psi)(x)=\psi(x-\alpha)$; we use the notations introduced in Lemma \ref{lemma2} and work with $\Gamma_g$, i.e. with $\tilde{g}=g$. We shall prove the following statement.
 
 \begin{proposition}\label{prop1}
 Let $\z\in\mathbb{C}$ be in the resolvent set of $\mathring{K}_\delta$ defined above. Let us define: \[T_\gamma(\z)=\tau_{-F(\sqrt{\delta}\, \gamma)}(\mathring{K}_\delta-\z\bb1)^{-1}\tau_{F(\sqrt{\delta}\, \gamma)}\] the associated family $\mathcal{T}(\z):=\big\{T_\gamma(\z)\big\}_{\gamma\in\Z^d}$ as in Lemma \ref{lemma2} and the "remainder operator":
 \[R_\delta(\z):=(\tilde{K}_\delta-\z {\bb1})\Gamma_g(\mathcal{T}(\z))-\bb1.\]
 Then there exists a constant $C$ such that for all $0\leq \delta\leq 1$ we have 
 
 \begin{align}\label{hc30}
 \Vert R_\delta(\z)\Vert \leq C\;\frac{\sqrt{\delta}}{ \text{dist}(\z,\sigma(\mathring{K}_\delta))}\; .
 \end{align}
 In particular, this implies that the spectrum of $\tilde{K}_\delta$ belongs to a neighbourhood of width $C\, \sqrt{\delta}$ of the spectrum of $\mathring{K}_\delta$. 
\end{proposition}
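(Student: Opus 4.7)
The plan is to use $\Gamma_g(\mathcal{T}(z))$ as an approximate right inverse of $\tilde K_\delta-z\bb1$, obtained by replacing, inside each cell $\mathrm{supp}(g_{\delta,\gamma})$, the slowly varying shift $F(\delta u)$ by its frozen value $F(\sqrt{\delta}\gamma)$. The point is that by unitary invariance $T_\gamma(z)=(M_\gamma-z\bb1)^{-1}$, where $M_\gamma:=\tau_{-F(\sqrt{\delta}\gamma)}\mathring K_\delta\tau_{F(\sqrt{\delta}\gamma)}$. Starting from the bookkeeping identity $(\tilde K_\delta-z\bb1)g_{\delta,\gamma}=g_{\delta,\gamma}(\tilde K_\delta-z\bb1)+[\tilde K_\delta,g_{\delta,\gamma}]$, the splitting $\tilde K_\delta-z\bb1=(M_\gamma-z\bb1)+(\tilde K_\delta-M_\gamma)$, and the partition of unity \eqref{hc5}, a short computation gives
$$R_\delta(z)=\sum_{\gamma\in\Z^d}g_{\delta,\gamma}(\tilde K_\delta-M_\gamma)T_\gamma(z)g_{\delta,\gamma}+\sum_{\gamma\in\Z^d}[\tilde K_\delta,g_{\delta,\gamma}]T_\gamma(z)g_{\delta,\gamma}.$$

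Next I would fit both sums into Lemma \ref{lemma2}. Fix an auxiliary $\tilde g$ equal to one on a ball containing $\mathrm{supp}(g)$ plus a fixed enlargement wide enough to absorb the $\mathcal{O}(1)$-widening produced by the cutoff $f(\sqrt{\delta}v)$ present in $\tilde{\mathfrak K}_\delta$. Then $g_{\delta,\gamma}=\tilde g_{\delta,\gamma}g_{\delta,\gamma}$, and since the kernel of $[\tilde K_\delta,g_{\delta,\gamma}]$ vanishes for $x$ outside this enlarged support one also has $[\tilde K_\delta,g_{\delta,\gamma}]=\tilde g_{\delta,\gamma}[\tilde K_\delta,g_{\delta,\gamma}]$. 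Therefore Lemma \ref{lemma2} applies to the families $S^{(1)}_\gamma:=g_{\delta,\gamma}(\tilde K_\delta-M_\gamma)T_\gamma(z)$ and $S^{(2)}_\gamma:=[\tilde K_\delta,g_{\delta,\gamma}]T_\gamma(z)$. Since $T_\gamma(z)$ is unitarily similar to $(\mathring K_\delta-z\bb1)^{-1}$, $\Vert T_\gamma(z)\Vert=1/\mathrm{dist}(z,\sigma(\mathring K_\delta))$, and the proposition is reduced to
$$\sup_\gamma\Vert g_{\delta,\gamma}(\tilde K_\delta-M_\gamma)\Vert\leq C\sqrt{\delta},\qquad \sup_\gamma\Vert[\tilde K_\delta,g_{\delta,\gamma}]\Vert\leq C\sqrt{\delta}.$$

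Both estimates follow from Calder\'on--Vaillancourt after verifying that the corresponding Weyl symbols have every H\"ormander seminorm of order $\sqrt{\delta}$. For the first, on $\mathrm{supp}(g_{\delta,\gamma})$ the inequality $|\sqrt{\delta}x-\gamma|\leq R$ together with the global bounds on $DF$ from Hypothesis \ref{hyp1.2} gives $|F(\delta x)-F(\sqrt{\delta}\gamma)|\leq C\sqrt{\delta}$, and the chain rule applied against the $S^0_{0,0}$ bounds on $a$ propagates this bound to every $(x,\xi)$-derivative of $a(x+F(\delta x),\xi)-a(x+F(\sqrt{\delta}\gamma),\xi)$ on that set; left multiplication by $g_{\delta,\gamma}$ truncates things outside $\mathrm{supp}(g_{\delta,\gamma})$, and the resulting Moyal corrections only improve matters because $g_{\delta,\gamma}$ is slowly varying (its $k$-th derivative is of order $\delta^{k/2}$). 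For the commutator, the leading Moyal contribution is $-i\,\nabla_\xi\tilde{a}[F]_\delta\cdot\nabla_x g_{\delta,\gamma}$, and since $\nabla_x g_{\delta,\gamma}(x)=\sqrt{\delta}\,(\nabla g)(\sqrt{\delta}x-\gamma)$ is uniformly of size $\sqrt{\delta}$ while the higher-order Moyal corrections carry additional powers of $\sqrt{\delta}$, Calder\'on--Vaillancourt again delivers the required bound.

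The main obstacle is the careful symbolic bookkeeping in this last step: one has to verify that all the Calder\'on--Vaillancourt seminorms of the composed Weyl symbols are bounded by $C\sqrt{\delta}$ with $C$ independent of $\gamma$ and $\delta\in[0,1]$, which uses crucially that every derivative of both $a$ and $F$ is globally bounded. Once \eqref{hc30} is established, the spectral inclusion follows by a Neumann argument: for $\mathrm{dist}(z,\sigma(\mathring K_\delta))>C\sqrt{\delta}$ one has $\Vert R_\delta(z)\Vert<1$, hence $\bb1+R_\delta(z)$ is invertible and $\tilde K_\delta-z\bb1$ admits a bounded right inverse; self-adjointness of $\tilde K_\delta$ then promotes this to two-sided invertibility, so $\sigma(\tilde K_\delta)$ sits inside the $C\sqrt{\delta}$-neighbourhood of $\sigma(\mathring K_\delta)$.
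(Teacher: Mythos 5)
Your proposal is correct and follows essentially the same strategy as the paper's proof: build the approximate resolvent from the frozen-coefficient operators $M_\gamma:=\tau_{-F(\sqrt{\delta}\gamma)}\mathring K_\delta\tau_{F(\sqrt{\delta}\gamma)}$ via the partition of unity \eqref{hc5}, identify $R_\delta(\z)$ as a $\Gamma_{\tilde g}$-type sum, and bound each cell-wise term by $C\sqrt{\delta}$ through Calder\'on--Vaillancourt. The differences are purely bookkeeping: you write the cell-wise error as $g_{\delta,\gamma}(\tilde K_\delta-M_\gamma)T_\gamma(\z)+[\tilde K_\delta,g_{\delta,\gamma}]T_\gamma(\z)$, while the paper uses $(\tilde K_\delta-M_\gamma)g_{\delta,\gamma}T_\gamma(\z)-[g_{\delta,\gamma},M_\gamma]T_\gamma(\z)$ (its Lemmas~\ref{lemma3} and \ref{lemma4}); both rearrangements are equivalent and rely on the same two estimates. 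One minor stylistic point: where you invoke Moyal products and then argue that the corrections are harmless because $g_{\delta,\gamma}$ is slowly varying, the paper sidesteps composition calculus entirely by noting that multiplying by $g_{\delta,\gamma}$ on the left or right is an exact product of distribution kernels, so the $\sqrt{\delta}$-smallness of all H\"ormander seminorms can be read off directly from the Taylor expansion of $F(\delta u)-F(\sqrt{\delta}\gamma)$ on the (compact, rescaled) support; this makes the remainder control cleaner than a Moyal-expansion argument, though your route also works. Your final Neumann-series plus self-adjointness argument (right inverse $\Rightarrow$ two-sided inverse for a bounded self-adjoint operator) is a correct, slightly more detailed version of the paper's conclusion.
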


\subsection{Proof of Proposition \ref{prop1}}

Here we use some of the ideas employed in \cite{BBC}. We need to investigate the operator $\tilde{K}_\delta\,  g_{\delta,\gamma}$ and compare it with $g_{\delta,\gamma}\, \tau_{-F(\sqrt{\delta}\gamma)}\,\mathring{K}_\delta\, \tau_{F(\sqrt{\delta}\gamma)}$. If they were equal to each other, then $R_\delta(\z)$ would equal zero. In fact there are two contributions to $R_\delta(\z)$: one coming from replacing $\tilde{K}_\delta\,  g_{\delta,\gamma}$ with $ \tau_{-F(\sqrt{\delta}\gamma)}\,\mathring{K}_\delta\, \tau_{F(\sqrt{\delta}\gamma)}\, g_{\delta,\gamma}$, and the other one coming from the commutator $ [\tau_{-F(\sqrt{\delta}\gamma)}\,\mathring{K}_\delta\, \tau_{F(\sqrt{\delta}\gamma)}\,,\, g_{\delta,\gamma}]$.

\begin{lemma}\label{lemma3}
There exists a constant $C>0$ such that 
\[\Vert \tilde{K}_\delta\,  g_{\delta,\gamma}-\tau_{-F(\sqrt{\delta}\gamma)}\,\mathring{K}_\delta\, \tau_{F(\sqrt{\delta}\gamma)}\,  g_{\delta,\gamma}\Vert \leq C\, \sqrt{\delta},\quad \forall \gamma\in \Z^d.\]
\end{lemma}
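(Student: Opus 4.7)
The plan is to rewrite the displayed operator difference, call it $D_\gamma$, as $\tilde{B}\,g_{\delta,\gamma}$ for a suitable $\tilde{B}$, localize it with a finite-range cutoff, replace $\tilde{B}$ by the Weyl quantization of a surrogate symbol whose $S^0_{0,0}$-seminorms are all of order $\sqrt{\delta}$, and then invoke Calderón-Vaillancourt. Setting $\beta:=F(\sqrt{\delta}\gamma)$ and assuming WLOG $F(0)=0$ (else replace $a$ by $a(\cdot+F(0),\cdot)$ and $F$ by $F-F(0)$), a direct kernel computation shows that in $(u,v)$ coordinates the Weyl kernel of $\tau_{-\beta}\mathring{K}_\delta\tau_\beta$ is $f(\sqrt{\delta}v)\mathfrak{K}_0(u+\beta,v)$, so $D_\gamma=\tilde{B}\,g_{\delta,\gamma}$, where $\tilde{B}$ has kernel $f(\sqrt{\delta}v)\mathfrak{B}(u,v)$ with $\mathfrak{B}$ the Weyl kernel of
\[\sigma_B(u,\xi):=a\bigl(u+F(\delta u),\xi\bigr)-a(u+\beta,\xi).\]
Because $F$ has uniformly bounded derivatives, $\sigma_B\in S^0_{0,0}$ with seminorms uniform in $\delta$ and $\gamma$. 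The pointwise smallness that drives the estimate is: on any region where $\sqrt{\delta}u-\gamma$ is bounded, $|\delta u-\sqrt{\delta}\gamma|=O(\sqrt{\delta})$, whence $|F(\delta u)-\beta|=O(\sqrt{\delta})$ and $\sigma_B(u,\xi)=O(\sqrt{\delta})$.

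Next I localize. Since $f$ is compactly supported, $\tilde{B}$ is of finite range $O(1/\sqrt{\delta})$, so $\tilde{B}\,g_{\delta,\gamma}\psi$ lives in $\{\sqrt{\delta}x-\gamma\in \operatorname{supp}(g)+B(0,C_f)\}$. Pick $h\in C_c^\infty(\R^d)$ with convex compact support, equal to $1$ on this set, and set $h_{\delta,\gamma}(x):=h(\sqrt{\delta}x-\gamma)$. Then
\[\tilde{B}\,g_{\delta,\gamma}=h_{\delta,\gamma}\,\tilde{B}\,h_{\delta,\gamma}\,g_{\delta,\gamma},\]
so it suffices to bound $\|h_{\delta,\gamma}\tilde{B}h_{\delta,\gamma}\|$. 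Applying the argument of Lemma \ref{lemmahc1} to the uniformly-$S^0_{0,0}$ symbol $\sigma_B$, removing the cutoff $f(\sqrt{\delta}v)$ costs only $O(\delta^\infty)$ in norm, so I may replace $\tilde{B}$ by $B:=\Op^w(\sigma_B)$.

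The crucial step is to replace $\sigma_B$ by a globally small surrogate. Choose $\phi\in C_c^\infty(\R^d)$ equal to $1$ on $\operatorname{supp}(h)$, and define
\[\tilde{F}_\gamma(u):=\phi(\sqrt{\delta}u-\gamma)\bigl(F(\delta u)-F(\sqrt{\delta}\gamma)\bigr),\quad \tilde{\sigma}_B^{(\gamma)}(u,\xi):=a\bigl(u+\beta+\tilde{F}_\gamma(u),\xi\bigr)-a(u+\beta,\xi).\]
On $\operatorname{supp}(\phi_{\delta,\gamma})$ the variable $\sqrt{\delta}u-\gamma$ is bounded, so the mean value theorem yields $\|\tilde{F}_\gamma\|_\infty=O(\sqrt{\delta})$; any derivative $\partial^\alpha\tilde{F}_\gamma$ produces either a factor $\sqrt{\delta}(\partial^{\alpha'}\phi)(\sqrt{\delta}u-\gamma)\cdot O(\sqrt{\delta})$ or a factor $\delta^{|\alpha''|}\phi\cdot \partial^{\alpha''}F(\delta u)$, all of order $\sqrt{\delta}$ or smaller uniformly in $\gamma,\delta$. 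Writing $\tilde{\sigma}_B^{(\gamma)}=\tilde{F}_\gamma(u)\int_0^1(\nabla_x a)(u+\beta+t\tilde{F}_\gamma(u),\xi)\,dt$ via the integral Taylor formula, and combining the Leibniz and chain rules with $a\in S^0_{0,0}$, one verifies that $\tilde{\sigma}_B^{(\gamma)}\in S^0_{0,0}$ with every seminorm $O(\sqrt{\delta})$ uniformly.

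To finish: on the support of $h_{\delta,\gamma}(x)h_{\delta,\gamma}(z)$ the midpoint $u=(x+z)/2$ satisfies $\sqrt{\delta}u-\gamma\in \operatorname{supp}(h)$ by convexity, so $\phi(\sqrt{\delta}u-\gamma)=1$ and $\tilde{\sigma}_B^{(\gamma)}=\sigma_B$ on this region. This gives the operator identity
\[h_{\delta,\gamma}\,B\,h_{\delta,\gamma}=h_{\delta,\gamma}\,\Op^w(\tilde{\sigma}_B^{(\gamma)})\,h_{\delta,\gamma},\]
and the Calderón-Vaillancourt theorem delivers $\|\Op^w(\tilde{\sigma}_B^{(\gamma)})\|\leq C\sqrt{\delta}$, closing the estimate. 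The main obstacle is uniformity in $\gamma$: the cutoffs $h,\phi$ and the surrogate $\tilde{F}_\gamma$ must all be constructed in the rescaled coordinate $\sqrt{\delta}u-\gamma$, so that the $\gamma$-dependence is a mere translation and every derivative of a cutoff automatically carries the required $\sqrt{\delta}$-factor; once this geometric set-up is arranged, the smallness of $\sigma_B$ on the localized region propagates cleanly to the global surrogate.
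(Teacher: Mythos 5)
Your proof is correct, and it is a genuinely different technical implementation of the same core idea.

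Both arguments hinge on the same smallness mechanism: once the operator is localized to a region where $\sqrt{\delta}\,u-\gamma$ is bounded, the quantity $F(\delta u)-F(\sqrt{\delta}\gamma)$ is $O(\sqrt{\delta})$ together with all its derivatives, and Calder\'on--Vaillancourt converts symbol smallness into operator-norm smallness. The difference is in how the localization in $u$ is achieved. The paper keeps the cutoff $f(\sqrt{\delta}v)$ in place, exploits that $f(\sqrt{\delta}v)\,g\bigl(\sqrt{\delta}(u-v/2)-\gamma\bigr)$ is supported in a bounded set of $y:=\sqrt{\delta}u-\gamma$ to insert a cutoff $\tilde h(\sqrt{\delta}u-\gamma)$ directly into the Weyl symbol, and then handles a $\delta$-scaled convolution $\alpha_{\delta,\gamma}(u,\cdot)\ast\bigl(\hat h_{y}(\cdot/\sqrt{\delta})/\delta^{d/2}\bigr)$. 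You instead work at the operator level: you factor $D_\gamma=\tilde B\,g_{\delta,\gamma}$, use the finite range $O(1/\sqrt{\delta})$ of $\tilde B$ to insert a sandwich $h_{\delta,\gamma}\,\tilde B\,h_{\delta,\gamma}$ with $h$ of convex support, strip off $f(\sqrt{\delta}v)$ at $O(\delta^\infty)$ cost, and then replace $\sigma_B$ by a globally small surrogate $\tilde\sigma_B^{(\gamma)}$ built from $\tilde F_\gamma=\phi(\sqrt{\delta}\cdot-\gamma)\bigl(F(\delta\cdot)-F(\sqrt{\delta}\gamma)\bigr)$, checking via convexity of $\operatorname{supp}h$ that the Weyl midpoint always lands where $\phi=1$, so the sandwich is unchanged. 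Your route avoids the $\delta$-scaled convolution entirely, at the cost of the extra steps of stripping $f$, constructing the auxiliary cutoff $\phi$, and the convexity argument; the paper's route is shorter in words but requires one to verify that convolving a uniformly $S^0_{0,0}$ symbol against the approximate identity $\hat h_y(\cdot/\sqrt{\delta})/\delta^{d/2}$ preserves uniform seminorm bounds. Both are valid.

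Two small remarks. First, the reduction to $F(0)=0$ is harmless but unnecessary: your argument only ever uses the difference $F(\delta u)-F(\sqrt{\delta}\gamma)$, which is invariant under adding a constant to $F$. Second, your derivative bookkeeping for $\tilde F_\gamma$ is a little loose as stated (Leibniz produces sums of mixed terms, not an ``either/or''), but the conclusion is right and in fact slightly stronger than you claim: $\tilde F_\gamma$ itself is $O(\sqrt{\delta})$ while every nontrivial derivative is $O(\delta)$, because each differentiation hitting $\phi(\sqrt{\delta}\cdot-\gamma)$ or $F(\delta\cdot)$ brings an extra $\sqrt{\delta}$ or $\delta$.
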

\begin{proof}
The distribution kernel of the operator \[L_{\delta,\gamma}:=\tilde{K}_\delta\,  g_{\delta,\gamma}-\tau_{-F(\sqrt{\delta}\gamma)}\,\mathring{K}_\delta\, \tau_{F(\sqrt{\delta}\gamma)}\,  g_{\delta,\gamma}\] from the statement of the Lemma ig given by:
\[f(\sqrt{\delta}\,v) \Big (\mathfrak{K}_0\big (u+F(\delta\; u)\,,\,v\big )-\mathfrak{K}_0\big (u+F(\sqrt{\delta}\,\gamma)\,,\,v \big )\Big ) g\big(\sqrt{\delta}\, (u-v/2)-\gamma\big).\]
Let us denote by 
\[h_{y}(x):=f(x)g(-x/2 +y).\]
We see that $h_y$ is identically zero if $|y|$ is large enough. We may find some smooth and compactly supported function $0\leq \tilde{h}\leq 1$ such that 
\[h_y(x)\tilde{h}(y)=h_y(x),\quad \forall x,y\in \R^d.\]
The role of $y$ is played by $\sqrt{\delta}\,u-\gamma$, which means that the quantity $|\sqrt{\delta}\, u-\gamma|$ remains uniformly bounded in both $\delta$ and $\gamma$ due to the presence of $\tilde{h}$. 

Denote by $\nabla_1 a(x,\xi)$ the partial gradient of $a(x,\xi)$ with respect to the spatial variables $x$. Then by denoting with 
\begin{align*}
 & \alpha_{\delta,\gamma}(u,\xi):= a\big (u+F(\delta\; u)\,,\,\xi\big )-a\big (u+F(\sqrt{\delta}\,\gamma)\,,\,\xi \big )\\
  &=\Big (F(\delta\, u)-F(\sqrt{\delta}\,\gamma)\Big )\cdot \int_0^1d\,r\,   \nabla_1a\big (u+F(\sqrt{\delta}\,\gamma)+\, r\,[F(\delta\; u)-F(\sqrt{\delta}\,\gamma)]\,,\,\xi\big ),
\end{align*}
the symbol of our operator $L_{\delta,\gamma}$ is 
\[b_{\delta,\gamma}(u,\xi):=(2\pi)^{-d/2}\tilde{h}(\sqrt{\delta}\, u-\gamma)\,\int_{\R^d}
\, \alpha_{\delta,\gamma}(u,\xi-\xi')\, \frac{\hat{h}_{\sqrt{\delta}\,t-\gamma}(\xi'/\sqrt{\delta})}{\delta^{d/2}}\, d\xi'.\]
An important observation is that the function 
\[\tilde{h}(\sqrt{\delta}\, u-\gamma)\,\Big (F(\delta\, u)-F(\sqrt{\delta}\,\gamma)\Big )\]
is uniformly bounded by $\sqrt{\delta}$ together with all its derivatives. 
Thus all the seminorms of the above symbol will have (at least) a factor $\sqrt{\delta}$,  uniformly in $\gamma$ and we are done after an application of Calder{\' o}n-Vaillancourt. 

\end{proof}

\begin{lemma}\label{lemma4}
There exists a constant $C>0$ such that 
\[\Vert   \big [g_{\delta,\gamma},\; \tau_{-F(\sqrt{\delta}\,\gamma)}\,\mathring{K}_\delta\, \tau_{F(\sqrt{\delta}\,\gamma)}  \big ]\Vert \leq C\, \sqrt{\delta},\quad \forall \gamma\in \Z^d.\]
\end{lemma}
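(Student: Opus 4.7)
The plan is to express the commutator as the Weyl quantization of a symbol $c_{\delta,\gamma}$ whose $S^0_{0,0}$ seminorms are bounded by $C\sqrt{\delta}$ uniformly in $\gamma\in\Z^d$, after which the Calderón--Vaillancourt theorem gives the operator-norm bound.

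Since conjugation by $\tau_{\pm F(\sqrt{\delta}\gamma)}$ shifts the Weyl symbol of $\mathring{K}_\delta$ by $F(\sqrt{\delta}\gamma)$ in the first variable, the distribution kernel of $T_\gamma:=\tau_{-F(\sqrt{\delta}\gamma)}\mathring{K}_\delta\tau_{F(\sqrt{\delta}\gamma)}$ is, in the variables $u=(x+y)/2$ and $v=x-y$, equal to $f(\sqrt{\delta}\,v)\,\mathfrak{K}_0(u+F(\sqrt{\delta}\gamma),v)$. The kernel of the commutator $[g_{\delta,\gamma},T_\gamma]$ is therefore
\[\bigl[g(\sqrt{\delta}(u+v/2)-\gamma)-g(\sqrt{\delta}(u-v/2)-\gamma)\bigr]\,f(\sqrt{\delta}\,v)\,\mathfrak{K}_0\bigl(u+F(\sqrt{\delta}\gamma),v\bigr).\]
A Taylor expansion of the square bracket rewrites it as $\sqrt{\delta}\,v\cdot G_{\delta,\gamma}(u,v)$, where
\[G_{\delta,\gamma}(u,v):=\int_{-1/2}^{1/2}\nabla g\bigl(\sqrt{\delta}(u+sv)-\gamma\bigr)\,ds\]
is smooth, uniformly bounded in $\delta,\gamma$, and whose higher $u$- and $v$-derivatives each carry an extra factor of $\sqrt{\delta}$ by the chain rule.

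To extract this $\sqrt{\delta}$ from the full symbol, I convert the factor $v$ into a $\xi$-derivative via integration by parts, using $v_j\,e^{-iv\cdot\xi}=i\partial_{\xi_j}e^{-iv\cdot\xi}$ together with $\mathfrak{K}_0(u',v)=(2\pi)^{-d}\int a(u',\eta)e^{iv\eta}\,d\eta$. After the change of variable $\eta\mapsto\xi+\eta$, the Weyl symbol of the commutator takes the form
\[c_{\delta,\gamma}(u,\xi)=i\sqrt{\delta}\,(2\pi)^{-d}\sum_j\int(\partial_{\xi_j}a)\bigl(u+F(\sqrt{\delta}\gamma),\xi+\eta\bigr)\,\Psi^{(j)}_{\delta,\gamma}(u,\eta)\,d\eta,\]
with $\Psi^{(j)}_{\delta,\gamma}(u,\eta):=\int G^{(j)}_{\delta,\gamma}(u,v)\,f(\sqrt{\delta}\,v)\,e^{iv\eta}\,dv$. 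Rescaling by $w=\sqrt{\delta}\,v$ shows that $\Psi^{(j)}_{\delta,\gamma}(u,\cdot)$ is the Fourier transform of a smooth compactly supported function whose $u$-derivatives each cost an additional $\sqrt{\delta}$; hence $\Psi^{(j)}_{\delta,\gamma}(u,\cdot)$ is Schwartz with $L^1$-norm and all $u$-derivatives uniformly bounded in $\delta$ and $\gamma$.

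It then remains to verify that $c_{\delta,\gamma}/\sqrt{\delta}$ lies in $S^0_{0,0}$ with seminorms uniform in $\delta,\gamma$: derivatives in $\xi$ simply add further $\partial_\xi$'s to $a$, while derivatives in $u$ either hit $a$ (which stays in $S^0_{0,0}$) or $\Psi^{(j)}_{\delta,\gamma}$ (controlled as above). An application of Calderón--Vaillancourt \cite{CV} then yields the claim. The main technical point is that the factor $v$ in the kernel is \emph{not} small --- it can be of order $\delta^{-1/2}$ because of the support of $f(\sqrt{\delta}\,v)$ --- so no pointwise bound can succeed; the oscillatory integration by parts in $\xi$ is precisely what converts this large factor into a harmless $\partial_\xi a$ and makes the $\sqrt{\delta}$ gain visible.
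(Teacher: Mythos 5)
Your proof is correct and follows the same strategy as the paper's: Taylor-expand the difference $g(\sqrt{\delta}(u+v/2)-\gamma)-g(\sqrt{\delta}(u-v/2)-\gamma)$ to extract $\sqrt{\delta}\,v$, pair the dangerous factor $v$ with the kernel $\mathfrak{K}_0$ by turning it into a $\partial_\xi$-derivative of the symbol $a$, and then invoke Calder\'on--Vaillancourt after checking that all $S^0_{0,0}$ seminorms of the resulting symbol are $O(\sqrt{\delta})$ uniformly in $\gamma$. Your write-up is more explicit than the paper's (which compresses the integration-by-parts step into the remark that "$\mathfrak{K}_0\,v$ becomes $\nabla_\xi a$ in the convolution"), and your closing observation --- that $|v|$ can be as large as $\delta^{-1/2}$ on the support of $f(\sqrt{\delta}\,v)$, so a pointwise bound on the kernel cannot succeed and the oscillation in $v$ is genuinely needed --- correctly identifies why the argument must be run at the symbol level.
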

\begin{proof}
The distribution kernel of the above commutator is 
\[f(\sqrt{\delta}\,v)\mathfrak{K}_0\big (u+F(\sqrt{\delta}\,\gamma)\,,\,v \big )\Big ( g(\sqrt{\delta}\, (u+v/2)-\gamma)-g(\sqrt{\delta}\, (u-v/2)-\gamma)\Big )\]
which equals 
\[\sqrt{\delta}\,2^{-1}\, \int_{-1/2}^{1/2} dr\, f(\sqrt{\delta}\,v)\mathfrak{K}_0\big (u+F(\sqrt{\delta}\,\gamma)\,,\,v \big ) \,v\cdot \nabla g(\sqrt{\delta}\, u-\gamma +r\sqrt{\delta}\,v/2).\]
The $v$ appearing in the factor $v\cdot \nabla g$ has to be coupled with $\mathfrak{K}_0$, in the sense that when we write the symbol of the commutator as a convolution, the factor $\mathfrak{K}_0\big (u+F(\sqrt{\delta}\,\gamma)\,,\,v \big ) \,v$ becomes  $\nabla_\xi a(u+ F(\sqrt{\delta}\,\gamma)\,,\,\xi-\xi')$ in the convolution. It turns out that again, all the seminorms of the commutator symbol will be of order $\sqrt{\delta}$ uniformly in $\gamma$. 

\end{proof}

We are now ready to complete the proof of Proposition \ref{prop1}. Let $\Omega\subset \R^d$ be a ball which contains the set 
\[\{x\in \R^d\;|\; {\rm dist}\big (x,{\rm supp}(g)\big )\leq 1\}\] and let us denote by $\tilde{g}$ the indicator function of $\Omega$.
\begin{remark}
Due to our choice of $f$ with support in the unit ball, the presence of $f(\sqrt{\delta}\, v)$ in the distribution kernels of both $\tilde{K}_\delta$ and $\mathring{K}_\delta$ implies that: 
\begin{equation}\label{hc6}
\begin{split}
\tilde{K}_\delta\,  g_{\delta,\gamma}=\tilde{g}_{\delta,\gamma}\tilde{K}_\delta\,  g_{\delta,\gamma},\quad \tau_{-F(\sqrt{\delta}\gamma)}\,\mathring{K}_\delta\, \tau_{F(\sqrt{\delta}\gamma)}\, 
\\ g_{\delta,\gamma}=\tilde{g}_{\delta,\gamma}\tau_{-F(\sqrt{\delta}\gamma)}\,\mathring{K}_\delta\, \tau_{F(\sqrt{\delta}\gamma)}\,  g_{\delta,\gamma}.
\end{split}
\end{equation}
\end{remark}

Now let $\{M_\gamma(\z)\}_{\gamma\in \Z^d}$ be a family $\mathcal{M}(\z)$ of operators given by 
\begin{align*}&M_\gamma(\z)\\
&=\Big (\tilde{K}_\delta\,  -\tau_{-F(\sqrt{\delta}\gamma)}\,\mathring{K}_\delta\, \tau_{F(\sqrt{\delta}\gamma)}\Big )g_{\delta,\gamma}\, T_\gamma(\z)-\big [g_{\delta,\gamma},\; \tau_{-F(\sqrt{\delta}\gamma)}\,\mathring{K}_\delta\, \tau_{F(\sqrt{\delta}\gamma)}  \big ] \, T_\gamma(\z).
\end{align*}
A short computation using also \eqref{hc6} shows that 
\[R_\delta(\z)=\Gamma_{\tilde{g}}\big ( \mathcal{M}(\z)\big ),\]
hence an application of Lemma \ref{lemma2} with $\mathcal{T}$ replaced by $\mathcal{M}$ 
finishes the proof of \eqref{hc30}. 

Now let us investigate the spectral consequences. If $\z\in\mathbb{C}$ is in the resolvent set of $\mathring{K}_\delta$, we have the identity
\[(\tilde{K}_\delta -\z\, \bb1)\Gamma_g\big (\mathcal{T}(\z)\big )=\bb1+R_\delta(\z).\]
Now if we also impose that ${\rm dist}\big (\z,\sigma(\mathring{K}_\delta)\big )> C\, \sqrt{\delta}$, then $\Vert R_\delta(z)\Vert <1$ and $(\tilde{K}_\delta -\z\, \bb1)$ is invertible, thus $\z$ cannot belong to the spectrum of $\sigma(\tilde{K}_\delta)$. Thus if $\lambda\in \sigma(\tilde{K}_\delta)$, then ${\rm dist}\big (\z,\sigma(\mathring{K}_\delta)\big )\leq  C\, \sqrt{\delta}$. This ends the proof of 
Proposition \ref{prop1}. 

\qed

\subsection{Concluding the proof of Theorem \ref{thm1}}\label{sec3.2}

We have seen in Proposition \ref{prop1} that if $\z\in\mathbb{C}$ is at a distance larger than a constant times $\sqrt{\delta}$ from the spectrum of $\mathring{K}_\delta$, then $\z$ is also in the resolvent set of $\tilde{K}_\delta$. Exactly the same type of proof can be used when we swap the roles of $a\big (x+F(\delta\, x),\xi\big )$ and  $a(x,\xi)$, namely by putting $\tilde{a}(x,\xi):=a\big (x+F(\delta\, x),\xi\big )$ and $\tilde{a}_\delta(x,\xi):=\tilde{a}\big (x-F(\delta \,x),\xi\big )$. Thus this proves that the Hausdorff distance between $\sigma(\mathring{K}_\delta)$ and $\sigma(\tilde{K}_\delta)$ goes like $\sqrt{\delta}$.  

This bound cannot be made better in general. Let $d=2$ and let \[a_\delta(x,\xi)=\cos(\xi_1)+\cos\big (\xi_2+(1+\delta)x_1\big ).\]
Through Weyl quantization, this symbol generates an operator which is isospectral with the Hofstadter model, in the Landau gauge, with a constant magnetic field $b=1+\delta$. It is known \cite{HS1989, HeSj} that $\Op(a_0)$ corresponds to the "half-flux case", and the operator has an absolutely continuous gap-less spectrum which contains the origin. If $\delta\neq 0$ is small, then the spectrum of $\Op(a_\delta)$ develops gaps near zero of width $\sqrt{\delta}$. A recent detailed analysis regarding the magnetic perturbations of "Dirac cones" which produce gaps of order $\sqrt{\delta}$ may be found in \cite{CHP}.

\section{Proof of Theorem \ref{thm2}}
We only prove the theorem for $E_+$ and $0\leq \delta\leq 1$. Due to \eqref{hc31}, it is enough to prove the statement for the pair of operators $\tilde{K}_\delta$ and $\mathring{K}_\delta$, where as before, $\tilde{K}_\delta$ corresponds to the integral kernel $f(\sqrt{\delta}\, v)\mathfrak{K}_0(u+F(\delta\,u\,),\,v)$, while $\mathring{K}_\delta$  corresponds to the integral kernel $f(\sqrt{\delta}\, v)\mathfrak{K}_0(u,v)$. 

\begin{lemma}\label{lemma5}
Let $\delta>0$. For every  $\psi\in L^2(\R^d)$ we define 
\[\Psi_\delta(x,y):=\left  (\frac{\delta}{4\pi}\right )^{d/4}\, e^{-\delta\frac{|x-y|^2}{8}}\; \psi(x).\]
Then $\Psi_\delta\in L^2(\R^{2d})$ and $\Vert \Psi_\delta\Vert_{L^2(\R^{2d})}=\Vert \psi\Vert_{L^2(\R^{d})}$. 
\end{lemma}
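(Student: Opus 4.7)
The plan is to verify the $L^2$ identity by a direct Fubini/Tonelli computation, exploiting the fact that $|\Psi_\delta(x,y)|^2$ is nonnegative and factorizes into a function of $x$ alone times a Gaussian centered at $x$ in the $y$ variable.

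First I would write out the square of the $L^2$-norm explicitly as
\[
\|\Psi_\delta\|_{L^2(\R^{2d})}^2 = \left(\frac{\delta}{4\pi}\right)^{d/2} \int_{\R^d}\int_{\R^d} e^{-\delta|x-y|^2/4}\,|\psi(x)|^2\,dy\,dx.
\]
Since the integrand is nonnegative, Tonelli's theorem lets me freely swap the order of integration and in particular perform the $y$-integral first for each fixed $x$.

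Next I would evaluate the inner Gaussian integral by the translation $z=y-x$, obtaining
\[
\int_{\R^d} e^{-\delta|x-y|^2/4}\,dy = \int_{\R^d} e^{-\delta|z|^2/4}\,dz = \left(\frac{4\pi}{\delta}\right)^{d/2},
\]
which is the standard $d$-dimensional Gaussian identity. Substituting back, the prefactor $(\delta/(4\pi))^{d/2}$ cancels exactly with $(4\pi/\delta)^{d/2}$, so
\[
\|\Psi_\delta\|_{L^2(\R^{2d})}^2 = \int_{\R^d}|\psi(x)|^2\,dx = \|\psi\|_{L^2(\R^d)}^2.
\]
In particular the right-hand side is finite, which simultaneously confirms $\Psi_\delta\in L^2(\R^{2d})$ and establishes the claimed isometry.

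There is no real obstacle here; the statement is just the observation that the Gaussian kernel has been normalized precisely so that multiplication by $\psi(x)$ and smearing in $y$ by this Gaussian lifts $\psi$ isometrically from $L^2(\R^d)$ to $L^2(\R^{2d})$. The only points requiring care are the nonnegativity justification for Tonelli and tracking the exponents $d/4$ versus $d/2$ when squaring the prefactor, both of which are routine.
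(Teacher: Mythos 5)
Your computation is correct and is precisely the "direct computation" the paper invokes: squaring the prefactor and the Gaussian, applying Tonelli, and integrating out $y$ via the standard $d$-dimensional Gaussian integral. Nothing further is needed.
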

\begin{proof}
Direct computation. 

\end{proof}

\begin{lemma}\label{lemma6}
Let $x,y,z\in\R^d$. Then: 
\[2^{-1}\, |x+y/2 -z|^2 + 2^{-1}\, |x-y/2 -z|^2-y^2/4=|x-z|^2.\]
\end{lemma}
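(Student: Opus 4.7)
The statement is a purely algebraic identity in $\mathbb{R}^d$, essentially a restatement of the parallelogram law, so the plan is to expand and simplify. First I would introduce the shorthand $u:=x-z$, which makes the two summands on the left look symmetric: they become $2^{-1}|u+y/2|^2$ and $2^{-1}|u-y/2|^2$. This change of variables is harmless since $z$ appears identically in both squared norms, and isolating $u$ makes the cancellation patterns obvious.

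Next I would apply the identity $|a\pm b|^2 = |a|^2 \pm 2\, a\cdot b + |b|^2$ with $a=u$ and $b=y/2$ to each of the two squared norms. Adding the two expansions, the cross terms $\pm u\cdot y$ cancel, leaving $2|u|^2 + |y|^2/2$. Multiplying by $2^{-1}$ gives $|u|^2+|y|^2/4$, and then subtracting the remaining $|y|^2/4$ on the left-hand side leaves exactly $|u|^2 = |x-z|^2$, which is what the right-hand side demands.

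There is no real obstacle here: the identity follows in one line from the polarization identity, and the only minor bookkeeping is tracking that $y/2$ (not $y$) is the vector being added and subtracted, which is why the correction term is $|y|^2/4$ rather than $|y|^2$. I would present the proof as a single displayed calculation rather than as a list of steps, since spelling out intermediate lines risks obscuring how trivial the cancellation is.
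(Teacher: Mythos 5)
Your proof is correct and is exactly the ``direct computation (the parallelogram identity)'' that the paper invokes without spelling out. The substitution $u=x-z$ and the expansion of $|u\pm y/2|^2$ is the standard one-line verification; nothing further is needed.
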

\begin{proof}
Direct computation (the parallelogram identity). 

\end{proof}

For $\psi\in\mathscr{S}(\R^d)$ we notice that we have the following identity:
\begin{equation}\label{hc7}
\begin{split}
    \langle\psi,\, \tilde{K}_\delta \psi\rangle&=\int_{\R^{2d}}du\, dv\, \overline{\psi}(u+v/2)\, f(\sqrt{\delta}\, v)\mathfrak{K}_0\big (u+F(\delta\,u)\,,\,v\big )\, {\psi}(u-v/2) \\ 
    &=\int_{\R^d}dy\int_{\R^{2d}}du\, dv\, \overline{\psi}(u+v/2)\, f(\sqrt{\delta}\, v)\mathfrak{K}_0\big (u+F(\delta\,u)\,,\,v\big )\,\times \\
    &\qquad \qquad \times \left  (\frac{\delta}{4\pi}\right )^{d/2}\, e^{-\delta\frac{|u-y|^2}{4}}\,  {\psi}(u-v/2),
    \end{split}
\end{equation}
where we used that the $y$-integral of the heat kernel equals $1$.

The next lemma is very important, and says that we may replace $F(\delta\, u)$ with $F(\delta\, y)$, making only an error of order $\delta$:
\begin{lemma}\label{lemma8}
There exists $C>0$ such that 
\begin{align*}
    \langle\psi,\, \tilde{K}_\delta \psi\rangle & \leq \int_{\R^d}dy\int_{\R^{2d}}du\, dv\, \overline{\psi}(u+v/2)\, f(\sqrt{\delta}\, v)\mathfrak{K}_0\big (u+F(\delta\,y)\,,\,v\big )\,\times  \\
    &\qquad \qquad \times \left  (\frac{\delta}{4\pi}\right )^{d/2}\, e^{-\delta\frac{|u-y|^2}{4}}\,  {\psi}(u-v/2)+ C\, \delta\, \Vert\psi\Vert^2,\quad 0< \delta\leq 1.
\end{align*}
\end{lemma}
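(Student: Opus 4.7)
The plan is to establish the stronger operator-norm bound $\Vert\tilde K_\delta - \tilde K'_\delta\Vert\le C\delta$, where $\tilde K'_\delta$ is the bounded operator whose quadratic form is the right-hand side of the claimed inequality; the lemma will then follow from $|\langle\psi,(\tilde K_\delta-\tilde K'_\delta)\psi\rangle|\le C\delta\Vert\psi\Vert^2$. By Fubini, $\tilde K'_\delta$ has distribution kernel
\[
\tilde{\mathfrak{K}}'_\delta(u,v) = f(\sqrt\delta\,v)\int_{\R^d}\left(\frac{\delta}{4\pi}\right)^{d/2}e^{-\delta|u-y|^2/4}\,\mathfrak{K}_0(u+F(\delta y),v)\,dy.
\]
Using the Fourier relation between kernel and Weyl symbol, the operator $\tilde K_\delta-\tilde K'_\delta$ is the Weyl quantization of the $\xi$-convolution of a $\delta$-uniformly bounded mollifier (coming from $f(\sqrt\delta v)$) with the symbol
\[
\Delta(u,\xi) := \int_{\R^d}\left(\frac{\delta}{4\pi}\right)^{d/2}e^{-\delta|u-y|^2/4}\bigl[a(u+F(\delta u),\xi)-a(u+F(\delta y),\xi)\bigr]\,dy.
\]
Since convolution in $\xi$ by a smooth, uniformly bounded symbol preserves all $S^0_{0,0}$-seminorms, it suffices to show that every seminorm of $\Delta$ is of order $\delta$ and then invoke the Calder\'on--Vaillancourt theorem.

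The key maneuver is the substitution $y = u + z/\sqrt\delta$, which eliminates the $\delta$-dependence of the Gaussian weight and gives
\[
\Delta(u,\xi) = (4\pi)^{-d/2}\int_{\R^d}e^{-|z|^2/4}\bigl[a(u+F(\delta u),\xi)-a(u+F(\delta u)+\tilde F_u(z),\xi)\bigr]\,dz,
\]
with $\tilde F_u(z) := F(\delta u + \sqrt\delta z) - F(\delta u)$. A second-order Taylor expansion (using the global boundedness of all derivatives of $F$ and $a$) yields
\[
a(u+F(\delta u),\xi)-a(u+F(\delta u)+\tilde F_u(z),\xi) = -\sqrt\delta\,\bigl((\nabla F)(\delta u)z\bigr)\cdot\nabla_x a(u+F(\delta u),\xi) + \delta\,S(u,z,\xi),
\]
where all $(u,\xi)$-derivatives of $S$ are bounded by polynomials in $|z|$ uniformly in $\delta\in[0,1]$. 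The leading term is odd in $z$ and vanishes after integration against the even Gaussian, while the remainder $\delta\,S$ contributes an $O(\delta)$ term because the Gaussian has finite moments of every order. Since the Gaussian is now $\delta$-independent, differentiating $\Delta$ in $u$ or $\xi$ only acts inside the bracket, so the same Taylor-plus-symmetry argument produces $\sup_{u,\xi}|D_u^\alpha D_\xi^\beta\Delta(u,\xi)|\le C_{\alpha,\beta}\delta$ for every multi-index.

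The main obstacle is precisely this symmetry step. A naive pointwise estimate using $|F(\delta u)-F(\delta y)|\le C\delta|u-y|$ together with the Gaussian of standard deviation $\delta^{-1/2}$ only produces $|\Delta|\le C\delta\cdot\delta^{-1/2} = C\sqrt\delta$; the additional $\sqrt\delta$ gain needed to reach $O(\delta)$ comes from the cancellation of the odd linear-in-$z$ term against the even Gaussian, which the substitution $y\mapsto u+z/\sqrt\delta$ makes transparent and which must be preserved after each $u$- or $\xi$-differentiation. Once the uniform seminorm bound is in place, Calder\'on--Vaillancourt closes the argument with $\Vert\tilde K_\delta-\tilde K'_\delta\Vert\le C\delta$.
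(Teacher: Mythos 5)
Your proof is correct and follows essentially the same route as the paper: a second-order Taylor expansion of the symbol in the shift $F(\delta y)-F(\delta u)$, a first-order Taylor expansion of $F$ itself, cancellation of the resulting odd linear term against the even Gaussian, Gaussian moment bounds for the remainders, and Calder\'on--Vaillancourt. The substitution $y=u+z/\sqrt{\delta}$ is a clean way to repackage the same power counting (the paper instead absorbs powers of $|y-u|$ directly into the $\delta$-dependent Gaussian), but it is a cosmetic rather than structural difference.
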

\begin{proof}
Let us consider the following distribution kernel:
\[\int_{\R^d}dy\, f(\sqrt{\delta}\, v)\Big (\mathfrak{K}_0\big (u+F(\delta\,y)\,,\,v\big )-\mathfrak{K}_0\big (u+F(\delta\,u)\,,\,v\big )\Big )\,\left  (\frac{\delta}{4\pi}\right )^{d/2}\, e^{-\delta\frac{|u-y|^2}{4}}.\]
Denoting by $\nabla_1$ the partial gradient with respect to the "$u\in\R^d$" variables, we can write the above distribution kernel as:
\begin{align*}
  &\int_{\R^d}dy\,f(\sqrt{\delta}\, v)\, \big (F(\delta\, y)-F(\delta\, u)\big )\cdot  \nabla_1\,\mathfrak{K}_0\big (u+F(\delta\,u)\,,\,v\big )\,\left  (\frac{\delta}{4\pi}\right )^{d/2}\, e^{-\delta\frac{|u-y|^2}{4}}  \\
  & + \int_{\R^d}dy\,\int_0^1dr\, (1-r) \Big (\big (F(\delta\, y)-F(\delta\, u)\big ) \cdot  \nabla_1\Big )^2\, \times 
  \\
  &\qquad \qquad\times f(\sqrt{\delta}\, v)\,\mathfrak{K}_0\big (u+F(\delta\,u)+r\,( F(\delta\,y)-F(\delta\, u)),\,v\big ) \left  (\frac{\delta}{4\pi}\right )^{d/2}\, e^{-\delta\frac{|u-y|^2}{4}}.
\end{align*}
From our Hypothesis \ref{hyp1.2} we have $|F(\delta\,y)-F(\delta\, u)|\leq\,C\delta\, |u-y|$, hence both above kernels correspond to $S_{0,0}^0$ symbols due to the fact that the growth in $|u-y|$ is controlled by the Gaussian factor.  

Moreover, in the second kernel we can couple one power of $\delta$ with the quadratic term $|y-u|^2$ and thus we can bound this second kernel by a constant times $\delta$ and conclude that all the seminorms of its associated symbol will be of order $\delta$. 

For the first kernel we use the Taylor expansion: 
\begin{align}\label{hc40}
F(\delta\,y)-F(\delta\, u)=\delta\, \nabla F(\delta\,u)\cdot (y-u) +\mathcal{O}(\delta^2\,|y-u|^2).
\end{align}
The remarkable fact is that the linear term vanishes identically after integration in $y$. The quadratic term can be dealt with as we did with the second kernel concluding that it will generate an operator with norm of order $\delta$.  
\end{proof}

Using the notation from Lemmas \ref{lemma5} and \ref{lemma6}, the inequality from Lemma \ref{lemma8} reads as:
\begin{align}\label{hc8}
    \langle\psi,\, \tilde{K}_\delta \psi\rangle & \leq \int_{\R^d}dy\int_{\R^{2d}}du\, dv\, \overline{\Psi_\delta}(u+v/2,y)\, f(\sqrt{\delta}\, v)\, e^{\frac{\delta\, v^2}{16}}\,\mathfrak{K}_0(u+F(\delta\,y)\,,\,v)\, \times \nonumber \\
    &\quad \times {\Psi_\delta}(u-v/2,y)+ C\, \delta\, \Vert\psi\Vert^2.
\end{align}
Another crucial observation is that the operator with the integral kernel given by 
\[f(\sqrt{\delta}\, v)\, e^{\frac{\delta\, v^2}{16}}\,\mathfrak{K}_0(u+F(\delta\, y)\,,\,v),\quad \forall y\in\R^d,\]
appearing in \eqref{hc8}, is unitarily equivalent, by a conjugation with $\tau_{F(\delta\,y)}$, with the operator denoted from now on by $M_\delta$ given by the distribution kernel 
\[\mathfrak{M}_\delta(u,v)=f(\sqrt{\delta}\, v)\, e^{\frac{\delta\, v^2}{16}}\,\mathfrak{K}_0(u,v).\]
These operators have the same spectrum, for all $y\in\R^d$, 
thus from \eqref{hc8} and Lemma \ref{lemma5} we have 
\begin{align}\label{hc9}
    \langle\psi,\, \tilde{K}_\delta \psi\rangle & \leq E_+(M_\delta)\,\Vert\psi\Vert^2 + C\, \delta\, \Vert\psi\Vert^2.
\end{align}
Finally, we see that the operator $M_\delta-\mathring{K}_\delta$ has the integral kernel  
\[(\delta/16) \int_0^1dr \, f(\sqrt{\delta}\, v)\, e^{\frac{r\delta\, v^2}{16}}\,v^2\,\mathfrak{K}_0(u,v).\]
The factor $v^2$ multiplied by $\mathfrak{K}_0$ will generate (by the usual integration by parts procedure for oscillatory integrals) some second order derivatives in $\xi$ of the symbol $a(x,\xi)$, while the Gaussian is just a smooth function depending on $\sqrt{\delta}\, v$, which on the support of $f$ remains bounded. Hence the operator $M_\delta-\mathring{K}_\delta$ has a norm bounded by $\delta$, which together with \eqref{hc9} implies: 
\[\langle\psi,\, \tilde{K}_\delta \psi\rangle  \leq E_+(\mathring{K}_\delta)\,\Vert\psi\Vert^2 + C\, \delta\, \Vert\psi\Vert^2,\]
i.e. $E_+(\tilde{K}_\delta)\leq E_+(\mathring{K}_\delta)+C\, \delta\,$. The inequality where $\tilde{K}_\delta$ and $\mathring{K}_\delta$ exchange places can be proved in a similar way. 

\qed

\section{Proof of Corollaries \ref{coro1} and \ref{coro2}}

Corollary \ref{coro1} is just a direct consequence of the definition of the Hausdorff distance. For Corollary \ref{coro2} we use a similar trick with the one used in \cite{GRS}. Let $\gamma_0=(2\lambda_0+\mu_0)/3$ and let us define 
\begin{equation}\label{F-R1}
T_\delta:= (K_\delta-\gamma_0\,\bb1)^2=K_\delta^2-2\gamma_0 K_\delta+\gamma_0^2\, \bb1. 
\end{equation}

Let us assume for the moment that $E_\pm(T_\delta)$ are Lipschitz at $\delta=0$, a fact which we will prove later. 
If $\delta\geq 0$ is small enough, then by using the spectral theorem, the fact that $\gamma_0$ is closer to $\lambda_0$ than to $\mu_0$, and the a-priori estimate from Theorem \ref{thm1} which says that $|\lambda_\delta-\lambda_0|\leq C\,\sqrt{\delta}$, we have that \[E_-(T_\delta)=(\lambda_\delta-\gamma_0)^2.\]
The Lipschitzianity of $E_-(T_\delta)$ implies the  existence of a constant $C_1>0$ such that:
\[\Big | (\lambda_\delta-\gamma_0)^2-(\lambda_0-\gamma_0)^2\Big |\leq C_1\, \delta,\quad \forall\, 0\leq \delta<\delta_0.\]
Writing \[(\lambda_\delta-\gamma_0)^2-(\lambda_0-\gamma_0)^2=(\lambda_\delta-\lambda_0)\, (\lambda_\delta+\lambda_0-2\gamma_0)\]
we have some small enough $\delta_1<\delta_0$ such that 
\[|\lambda_\delta-\lambda_0|\leq \frac{C_1\,\delta}{|\lambda_\delta+\lambda_0-2\gamma_0|}\leq  \frac{C_1\,\delta}{\gamma_0-\lambda_0}=\frac{3C_1\,\delta}{\mu_0-\lambda_0},\quad \forall\, 0\leq \delta<{\delta}_1\]
and we are done. 

The only thing which remains to be proved is that $E_\pm(T_\delta)$ are Lipschitz at $\delta=0$. We start with a lemma.

\begin{lemma}\label{lemma9}Given  $B=\Op^w(b)$ with $ b\in S_{0,0}^0(\R^{2d})$, we  denote by $\mathcal{M}_\delta(B)$ the Weyl quantization of the perturbed symbol $b(x+\, F(\delta\, x)\,,\,\xi) $. 
If $a\in S_{0,0}^0(\R^{2d})$ then $K_\delta=\mathcal{M}_\delta(K_0)$ and 
\[\Vert \mathcal{M}_\delta (K_0^2)-K_\delta^2\Vert \leq C\, \delta.\]
\end{lemma}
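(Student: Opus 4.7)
The identity $K_\delta = \mathcal{M}_\delta(K_0)$ is immediate from Hypothesis \ref{hyp1.2}: $K_0=\Op^w(a)$, and $\mathcal{M}_\delta$ sends a symbol $b$ to $b(x+F(\delta x),\xi)$, so $\mathcal{M}_\delta(K_0)=\Op^w\bigl(a(\cdot+F(\delta\cdot),\cdot)\bigr)=K_\delta$. For the operator-norm estimate, my plan is to compute the Weyl symbol of $\mathcal{M}_\delta(K_0^2)-K_\delta^2$, verify that it lies in $S^0_{0,0}(\R^d\times\R^d)$ with every seminorm bounded by $C\,\delta$, and conclude via the Calder\'on–Vaillancourt theorem, mirroring the strategy used in Lemmas \ref{lemmahc1}, \ref{lemma3} and \ref{lemma4}.

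Denote the Moyal product by $\#$. The symbol of $K_0^2$ is $a\#a$, hence the symbol of $\mathcal{M}_\delta(K_0^2)$ at $(x,\xi)$ is $(a\#a)(x+F(\delta x),\xi)$; meanwhile the symbol of $K_\delta^2$ is $a_\delta\#a_\delta$ with $a_\delta(w,\omega):=a(w+F(\delta w),\omega)$. Using the standard oscillatory-integral representation of $\#$ with variables $Y=(y,\eta)$, $Z=(z,\zeta)$ and phase $\sigma(Y,Z)=y\cdot\zeta-\eta\cdot z$, the difference symbol $D_\delta$ is an oscillatory integral whose integrand equals
\[a\bigl(x+y+F(\delta(x+y)),\xi+\eta\bigr)\,a\bigl(x+z+F(\delta(x+z)),\xi+\zeta\bigr)-a(x+F(\delta x)+y,\xi+\eta)\,a(x+F(\delta x)+z,\xi+\zeta).\]
The integral Taylor formula yields $F(\delta(x+y))-F(\delta x)=\delta\,H_\delta(x,y)\,y$ with $H_\delta(x,y):=\int_0^1\nabla F(\delta x+t\delta y)\,dt$, and by Hypothesis \ref{hyp1.2} all derivatives of $H_\delta$ are bounded uniformly in $(x,y,\delta)$. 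A further Taylor expansion of $a$ in its first argument then writes $a(x+F(\delta x)+y+\delta H_\delta(x,y)y,\xi+\eta)$ as $a(x+F(\delta x)+y,\xi+\eta)$ plus a remainder of the form $\delta\,(H_\delta(x,y)y)\cdot R_\delta(x,y,\xi,\eta)$, with $R_\delta$ uniformly in $S^0_{0,0}$; the $z$-factor is treated analogously. Expanding the product, $D_\delta$ becomes a sum of three oscillatory integrals, each carrying an explicit prefactor $\delta$ multiplied by one or two linear factors $H_\delta(x,y)y$ or $H_\delta(x,z)z$.

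The polynomial growth in $y,z$ is absorbed by integration by parts against the phase, using $y_j\,e^{-2i\sigma(Y,Z)}=\tfrac{i}{2}\partial_{\zeta_j}e^{-2i\sigma(Y,Z)}$ and $z_j\,e^{-2i\sigma(Y,Z)}=-\tfrac{i}{2}\partial_{\eta_j}e^{-2i\sigma(Y,Z)}$; each such factor is transformed into a bounded $\xi$-derivative of $a$, which is itself in $S^0_{0,0}$. After this manipulation $D_\delta\in S^0_{0,0}$ and every seminorm is controlled by $C\,\delta$, since further $(x,\xi)$-differentiations either land on $a$ or $\nabla_1 a$ (bounded with all derivatives), on the uniformly smooth factor $H_\delta$, or on the explicit prefactor $\delta$. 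Calder\'on–Vaillancourt \cite{CV} then delivers $\|\mathcal{M}_\delta(K_0^2)-K_\delta^2\|\leq C\,\delta$.

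The main technical obstacle is the bookkeeping of the Taylor remainders and integration-by-parts steps to certify that no hidden polynomial growth in $y$ or $z$ survives in the final symbol; however, this is the same mechanism that was already used in Lemmas \ref{lemma3} and \ref{lemma4}, and the assumption that every derivative of $F$ is globally bounded is tailored precisely so that all leftover factors remain $S^0_{0,0}$-bounded with the correct power of $\delta$ factored out.
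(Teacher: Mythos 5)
Your argument is correct in outline but takes a genuinely different route from the paper. The paper avoids the Moyal product entirely: it decomposes $L^2(\R^d)$ as $\bigoplus_{\gamma\in\Z^d}L^2([-1/2,1/2]^d)$, writes $K_\delta$ as an operator-valued matrix $A_{\gamma,\gamma'}(\delta)$ whose entries decay rapidly off-diagonal, replaces $F(\delta(\x+\x'+\gamma+\gamma')/2)$ by $F(\delta(\gamma+\gamma')/2)$ (harmless because the internal variables $\x,\x'$ live in a bounded set), computes the square as a matrix product over an intermediate index $\gamma''$, and absorbs the errors $\delta\,\langle\gamma-\gamma''\rangle\,\langle\gamma''-\gamma'\rangle$ with the off-diagonal decay and a Schur test. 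The compactness of the fundamental cell $\Omega$ does the job that your integration by parts does, and no oscillatory-integral calculus is needed. Your symbol-level approach via $a_\delta\#a_\delta$ versus $(a\#a)(x+F(\delta x),\xi)$ is a legitimate alternative and is arguably cleaner for readers fluent in $S^0_{0,0}$ calculus; what it buys you is a purely analytic proof with no ad hoc lattice decomposition, at the price of more delicate oscillatory-integral bookkeeping. One imprecision worth fixing: the Taylor remainder $R_\delta(x,y,\xi,\eta)=\int_0^1\nabla_1 a\bigl(x+F(\delta x)+y+s\delta H_\delta(x,y)y,\xi+\eta\bigr)\,ds$ is \emph{not} uniformly $S^0_{0,0}$ as stated, since each $\partial_x$ hitting the argument $s\delta H_\delta(x,y)y$ produces a factor $O(\delta^2)\,y$, i.e.\ fresh linear growth in $y$. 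This does not wreck the argument -- every such factor carries extra powers of $\delta$ and can be absorbed by additional integrations by parts, and only finitely many derivatives are needed for Calder\'on--Vaillancourt -- but the induction over $(\alpha,\beta)$ in the seminorm estimate should be spelled out to close the loop, rather than asserting $R_\delta\in S^0_{0,0}$ outright.
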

\begin{proof} 
Denote by $\chi$ the indicator function of the unit hypercube $\Omega:=[-1/2,1/2]^d$. The operator $K_\delta$ can be seen as an operator in $\bigoplus_{\gamma\in \Z^d}L^2(\Omega)$ given by the operator valued matrix 
\begin{align*}
&A_{\gamma,\gamma'}(\delta)(\x,\x')\\
&\qquad :=\mathfrak{H}_0\big ((\x+\x'+\gamma+\gamma')/2 +F(\delta\, (\x+\x')/2+\delta (\gamma+\gamma')/2)\,,\,\x+\gamma-\x'-\gamma'\big ),
\end{align*}
where $(\x,\x')\in\Omega\times\Omega$ and $(\gamma,\gamma')\in\Z^d\times\Z^d$. 
%in fact we have the equality:
%$$\mathfrak{H}_\delta(u,v)= \sum_{\gamma,\gamma'\in \Z^d} \chi(u+v/2 +\gamma)\, A_{\gamma,\gamma'}(\delta)(u+v/2,u-v/2)\, \chi(u-v/2 +\gamma'). $$

Due to the strong localization of $\mathfrak{H}_0$ with respect to $v=x-x'$, one can prove that for every $N\geq 1$ there exists $C_N>0$ such that 
\[\Vert A_{\gamma,\gamma'}(\delta)\Vert_{L^2(\Omega)} \leq C_N \, \langle \gamma-\gamma'\rangle^{-N}.\]

For $(\x,\x')\in\Omega\times\Omega$ let us define:
\[\tilde{A}_{\gamma,\gamma'}(\delta)(\x,\x'):=\mathfrak{H}_0\big ((\x+\x'+\gamma+\gamma')/2 +F(\delta (\gamma+\gamma')/2)\,,\,\x+\gamma-\x'-\gamma'\big ).\]

Using a Taylor expansion for $F$ together with the strong localization of $\mathfrak{H}_0$ in the $v$ variable, one may also show that for
every $N\geq 1$ there exists $C_N>0$ such that 
\[\Vert A_{\gamma,\gamma'}(\delta)-\tilde{A}_{\gamma,\gamma'}(\delta)\Vert_{L^2(\Omega)} \leq C_N \, \delta \,\langle \gamma-\gamma'\rangle^{-N}.\]
Up to a use of the Schur test in $\bigoplus_{\gamma\in \Z^d}L^2(\Omega)$, we get that \[K_\delta -\sum_{\gamma,\gamma'\in \Z^d} \chi(\cdot +\gamma)\tilde{A}_{\gamma,\gamma'}(\delta)\, \chi(\cdot+\gamma')=\mathcal{O}(\delta).\] 
Thus up to an error of order $\delta$ in operator norm, we have that $K_\delta^2$ is given by 
\[\sum_{\gamma,\gamma'\in \Z^d} \sum_{\gamma''\in \Z^d}\chi(\cdot +\gamma)\tilde{A}_{\gamma,\gamma''}(\delta)\, \tilde{A}_{\gamma'',\gamma'}(\delta)\chi(\cdot+\gamma').\]
By replacing $F(\delta(\gamma+\gamma'')/2)$ and $F(\delta(\gamma''+\gamma')/2)$ with $F(\delta(\gamma+\gamma')/2)$, we produce an error of the type 
\[\delta \, \langle \gamma'-\gamma''\rangle\, \langle \gamma-\gamma''\rangle,\]
that is controlled by the strong off-diagonal decay in both $|\gamma-\gamma''|$ and $|\gamma''-\gamma'|$ induced by $\mathfrak{K}_0$. Hence $K_\delta^2$ is up to an error of order $\delta$ given by the operator valued matrix:
\begin{align*}B_{\gamma,\gamma'}(\delta)(\x,\x')&:=\big ({\rm Integral\, kernel\, of }K_0^2\big )\big ((\x+\x'+\gamma+\gamma')/2 \\
&\qquad +F(\delta (\gamma+\gamma')/2)\,,\,\x+\gamma-\x'-\gamma'\big ).
\end{align*}
Finally, by again using a Taylor expansion and a Schur test, one shows that this operator and $\mathcal{M}_\delta(K_0^2)$ differ from each other   by something of order $\delta$ in the operator topology, and the proof is finished.

\end{proof}

Going back to \eqref{F-R1}, we notice that Lemma \ref{lemma9} implies that modulo an error of order $\delta$ we can replace $T_b$ by 
\[\mathcal{M}_\delta(K_0^2)-2\gamma_0\,\mathcal{M}_\delta(K_0) +\gamma_0^2\,\bb1=\mathcal{M}_\delta(K_0^2-2\gamma_0\, K_0 +\gamma_0^2\,\bb1),\]
which is the same type as $K_\delta$ and thus the Lipschitzianity of $E_-(T_b)$ will follow from Theorem \ref{thm2}.


\begin{thebibliography}{99}

 

\bibitem{BBC} 
Beckus, S., Bellissard, J., Cornean, H.D.: H\"older Continuity of the Spectra for Aperiodic Hamiltonians. {\it Ann. H. Poincar{\' e}} {\bf 20}, 3603-3631 (2019).


%\bibitem{B1}
%Bellissard, J.: K-theory of $C^*$-algebras in solid state physics. In: Dorlas, T.C., Hugenholtz, N.M., Winnink,M. (eds.), Statistical Mechanics and Field Theory: Mathematical Aspects, Lecture Notes in Physics 257, Springer, Berlin, 99–156 (1986).


\bibitem{B2}Bellissard, J.: Lipshitz continuity of gap boundaries for Hofstadter-like spectra.
{\it Commun. Math. Phys.} {\bf 160}, 599–613 (1994).

\bibitem{CV} Calder{\' o}n, A.-P.,  Vaillancourt, R.: 
On the boundedness of pseudo-differential operators.
{\it J. Math. Soc. Japan} {\bf 23}, 374–378 (1971).

\bibitem{Cornean10}
Cornean, H.D.: On the Lipschitz continuity of spectral bands of Harper-like and magnetic Schr\"odinger operators. {\it Ann. H. Poincar\'{e}} {\bf 11}, 973–-990 (2010).



\bibitem{C4} 
Cornean, H.D., Garde, H., St{\o}ttrup, B.B., S{\o}rensen, K.S.: Magnetic pseudodifferential operators represented as generalized Hofstadter-like matrices. {\it J. Pseudodiffer. Oper. Appl.} {\bf 10}(2), 307-336 (2019).

\bibitem{Beals} 
Cornean, H.D., Helffer, B., Purice, R.: A Beals criterion for magnetic pseudo-differential operators proved with magnetic Gabor frames. {\it Comm. P.D.E.} {\bf 43}(8), 1196-1204 (2018).

 \bibitem{CHP} 
Cornean, H.D., Helffer, B., Purice, R.: Spectral analysis near a Dirac type crossing in a weak non-constant magnetic field. {\it Trans. Amer. Math. Soc.} {\bf 374}(10), 7041-7104 (2021).

\bibitem{CP1} 
Cornean, H.D., Purice, R.: On the regularity of the Hausdorff distance between spectra of perturbed magnetic Hamiltonians.  {\it Spectral Analysis of Quantum Hamiltonians. Spectral Days 2010}, 55-66 (2012). \url{https://doi.org/10.1007/978-3-0348-0414-1}

\bibitem{CP2} 
Cornean, H.D., Purice, R.: Spectral edge regularity of magnetic Hamiltonians. {\it J. Lond. Math. Soc.} {\bf 92}, 89-104 (2015).

\bibitem{G0}Gr\"ochenig, K: Time-frequency analysis of Sj\"ostrand's class. {\it Rev. Mat. Iberoam.} {\bf 22}(2), 703-724 (2006).

\bibitem{GRS}Gr\"ochenig, K., Romero, J.L., Speckbacher, M.: Lipschitz Continuity of Spectra of Pseudodifferential Operators in a Weighted Sj\"ostrand Class and Gabor Frame Bounds. \url{https://arxiv.org/abs/2207.08669} (2022)

\bibitem{HS1989} 
{Helffer, B., Sj\"ostrand, J.}:
Equation de Schr\"odinger avec champ magn\'etique et \'equation de Harper.
{\it Springer  Lecture Notes in Phys. No. 345}, 118-197 (1989).

\bibitem{HeSj}
Helffer, B., Sj\"ostrand, J.: On diamagnetism and de Haas-van Alphen effect. {\it Ann.\ Inst.\ H.\ Poincar\'e Phys.\ Th\'eor.} {\bf 52}, 303--375 (1990).


\bibitem{H-3} L. H\"{o}rmander: {\it The Analysis of Linear Partial Differential Operators III:} \\{\it Pseudo-Differential Operators}. 
Springer-Verlag Berlin Heidelberg, (2007).

%\bibitem{RammalBellissard}
%Rammal, R., Bellissard, J.:  An algebraic semi-classical approach to Bloch electrons in a magnetic field. {\it Journal de Physique} {\bf 51}(17) (1990). 








\end{thebibliography}
\end{document}